\documentclass[11pt]{article}

\usepackage{graphics,graphicx,amssymb,amsmath,amsthm,latexsym,xspace,cases}
\usepackage{cleveref}[2012/02/15]
\usepackage{booktabs}
\usepackage{mathtools}
\usepackage[usenames] {color}
\usepackage[mathscr] {eucal}
\usepackage{todonotes}
\usepackage{latexsym}
\usepackage{amsfonts}

\newcommand{\lab}{\operatorname{\ell}}
\sloppy 

\setlength{\textwidth}{6.8in}
\setlength{\evensidemargin}{-0.15in}
\setlength{\oddsidemargin}{-0.15in}
\setlength{\topmargin}{-0.7in}
\setlength{\textheight}{9.4in}
\setlength{\textfloatsep}{10pt}

\newcommand{\concats}{\cdots}

\def\newdelta{\tilde\delta}
\newtheorem{theorem}{Theorem}[section]
\newtheorem{lemma}[theorem]{Lemma}
\newtheorem{claim}[theorem]{Claim}
\newtheorem{proposition}[theorem]{Proposition}

\theoremstyle{definition}

\newcommand{\ignore}[1]{}

\newcommand{\Oh}{O}

\newcommand{\ceil}[1]{\left\lceil{#1}\right\rceil}
\newcommand{\floor}[1]{\left\lfloor{#1}\right\rfloor}

\newcommand{\dfs}{\operatorname{dfs}}

\newcommand{\dist}{\operatorname{dist}}

\newcommand{\set}[1]{\left\{{#1}\right\}}

\newcommand{\range}[2]{[#1,#2]}

\newcommand{\pare}[1]{\left({#1}\right)}

\def\sF{\mathscr{F}}
\def\sG{\mathscr{G}}
\def\sC{\mathscr{C}}

\def\newdelta{\tilde\delta}
\def\mask{\mathrm{mask}}
\def\decode{\mathrm{\textsc{Decode}}}

\def\SumIntegers{\mathrm{SumIntegers}}
\def\root{\mathrm{root}}

\def\PreFix{\mathrm{PreFix}}

\def\MicroSum{\mathrm{MicroSum}}
\def\MacroSum{\mathrm{MacroSum}}
\def\MicroRoot{\mathrm{MicroRoot}}
\def\sT{\mathscr{T}}
\def\parent{\mathrm{parent}}

\def\code{\mathrm{code}}

\def\beginsmall#1{\vspace{-\parskip}\begin{#1}\itemsep-\parskip}
\def\endsmall#1{\end{#1}\vspace{-\parskip}}

\title{Simpler, faster and shorter labels for distances in graphs}

\author{Stephen Alstrup   \thanks{Depart. of Computer Science,
University of Copenhagen, Denmark. E-mail:
{\tt stephen.alstrup.private@gmail.com}.}
\and Cyril Gavoille \thanks{LaBRI - Universit\'e de Bordeaux, France. E-mail:
{\tt gavoille@labri.fr}.} \and
Esben Bistrup Halvorsen \thanks{Depart. of Computer Science,
University of Copenhagen, Denmark. E-mail:
{\tt esben@bistruphalvorsen.dk}.} 
\and  Holger Petersen \thanks{E-mail: {\tt dr.holger.petersen@googlemail.com}.}}

\begin{document}

\maketitle

\vspace*{-15pt}

\begin{abstract}
\noindent%
We consider how to assign \emph{labels} to any undirected graph with $n$ nodes such that, given the labels of two nodes and no other information regarding the graph, it is possible to determine the distance between the two nodes.  The challenge in such a \emph{distance labeling scheme} is primarily to minimize the maximum label lenght and secondarily to minimize the time needed to answer distance queries (decoding). Previous schemes have offered different trade-offs between label lengths and query time. This paper presents a simple algorithm with shorter labels and shorter query time than any previous solution, thereby improving the state-of-the-art with respect to both label length and query time in one single algorithm. Our solution addresses several open problems concerning label length and decoding time and is the first improvement of label length for more than three decades.

More specifically, we present a distance labeling scheme with labels of length $\frac{\log 3}{2}n+o(n)$ bits\footnote{Throughout the paper, all logarithms are in base 2.} and constant decoding time. This outperforms all existing results with respect to both size and decoding time, including Winkler's (Combinatorica 1983) decade-old result, which uses labels of size $(\log 3)n$ and $\Oh(n / \log n)$ decoding time, and Gavoille et al.\  (SODA'01), which uses labels of size $11n+o(n)$ and $O(\log\log n)$ decoding time.  In addition, our algorithm is simpler than the previous ones.
In the case of integral edge weights of size at most $W$, we present almost matching upper and lower bounds for the label size $\ell$:
$\frac{1}{2}(n-1) \log \ceil{\frac{W}{2} + 1} \le \ell \le \frac{1}{2}n \log{(2W+1)} + O(\log n\cdot\log(nW))$.
Furthermore, for $r$-additive approximation labeling schemes, where distances can be off by up to an additive constant $r$,  we  present both upper and lower bounds. In particular, we present an upper bound for $1$-additive approximation schemes which, in the unweighted case, has the same size (ignoring second order terms) as an adjacency labeling scheme, namely $n/2$. 
We also give results for bipartite graphs as well as for exact and $1$-additive distance oracles.
\end{abstract}

\newpage

\section{Introduction}

A \emph{distance labeling scheme} for a given family of graphs assigns \emph{labels} to the nodes of each graph from the family such that, given the labels of two nodes in the graph and no other information, it is possible to determine the shortest distance between the two nodes. The labels are assumed to be composed of bits. The main goal is to make the worst-case label size as small as possible while, as a subgoal, keeping query (decoding) time under control. 
The problem of finding implicit representations with small labels for specific families of graphs was first introduced by Breuer~\cite{Breuer66,BF67}, and efficient labeling schemes were introduced in~\cite{KNR92,muller}.

\subsection{Distance labeling}

For an undirected, unweighted graph,  a na\"ive solution to the distance labeling problem is to let each label be a table with the $n-1$ distances to all the other nodes, giving labels of size around $n\log n$ bits. For graphs with bounded degree $\Delta$ it was shown~\cite{BF67} in the 1960s that labels of size $2n\Delta$ can be constructed such that two nodes are adjacent whenever the Hamming distance~\cite{hamming} of their labels is at most $4\Delta-4$. In the 1970s, Graham and Pollak~\cite{grahampollak} proposed to label each node with symbols from $\{0,1,*\}$, essentially representing nodes as corners in a ``squashed cube'', such that the distance between two nodes exactly equals the Hamming distance of their labels (the distance between $*$ and any other symbol is set to 0). They conjectured the smallest dimension of such a squashed cube (the so-called \emph{Squashed cube conjecture}), and their conjecture was subsequently proven by Winkler~\cite{winkler} in the 1980s. This reduced the label size to $\ceil{(n-1) \log 3}$, but the solution requires $\Oh(n / \log n)$ query time to decode distances. Combining~\cite{KNR92} and~\cite{moon1965minimal} gives a lower bound of $\ceil{n/2}$ bits. A different distance labeling scheme of size of $11n+o(n)$ and  with $O(\log \log n)$ decoding time was proposed in~\cite{Gavoille200485}. The article also raised it as open problem to find the right label size. Later in~\cite{weinmannpeleg} the algorithm from~\cite{Gavoille200485} was modified, so that the decoding time was further reduced to $O(\log^* n)$ with slightly larger labels, although still of size $O(n)$. This article raised it as an open problem whether the query time can be reduced to constant time. Having distance labeling with short labels and simultaneous fast decoding time is a problem also addressed in text books such as~\cite{spinrad2003efficient}. Some of our are solutions are simple enough to replace material in text books.

Addressing the aforementioned  open problems, we present a distance labeling scheme with labels of size $\frac{\log 3}{2}n+o(n)$ bits and with constant decoding time. See \Cref{tab:unweightedgraph} and \Cref{picture} for an overview.

\begin{table}[h!]
\centering
\makebox[0pt][c]{
\begin{tabular}{|c|c|c|c|}
\hline
\bf  Space & \bf Decoding time & \bf Year & \bf Reference  \\
\hline
 $(\log 3)n$ & $\Oh(n / \log n)$ & 1972/1983 & ~\cite{grahampollak,winkler} \\
\hline
$11n$ & $\Oh(\log \log n)$ & 2001 & ~\cite{Gavoille200485}\\
\hline
$cn, c>11$ & $O(\log^* n)$ & 2011 & ~\cite{weinmannpeleg} \\
\hline
$\frac{\log 3}{2}n$ & $\Oh(1)$ & 2015 & this paper \\
\hline
\end{tabular}
}
\caption{Unweighted undirected graphs. Space is listed presented without second order terms. A graphical presentation of the results is given in \Cref{picture}}
\label{tab:unweightedgraph}
\end{table}

\begin{figure}
\begin{tikzpicture}[x=2.3cm, y=0.9cm] 
\draw   (4, 4);
\draw [fill] (1, 1) circle (2pt);
\node at (1.5,1) {This paper};
\draw [fill] (4, 2) circle (2pt);
\node at (4.5,2) {1972/1983};
\draw [fill] (3, 3) circle (2pt);
\node at (3.3,3) {2001};
\draw [fill] (2, 4) circle (2pt);
\node at (2.3,4) {2011};
\draw [thick,->] (0, 0) -- (4.5, 0);
\draw [thick,->] (0, 0) -- (0, 4.5);
\node at (5, 0) {Time};
\node at (1,-0.3) {$\Oh(1)$};
\node at (2,-0.3) {$\Oh(\log^*n)$};
\node at (3,-0.3) {$\Oh(\log \log n)$};
\node at (4,-0.3) {$\Oh(n / \log n)$};

\node at (-0.4,1) {$\frac{\log 3}{2}n$};
\node at (-0.35,2) {$(\log 3)n$};
\node at (-0.25,3) {$11n$};
\node at (-0.3,4) {$\Oh(n)$};
\node at (0, 5) {Space};
\end{tikzpicture}
\caption{A graphical representation of the results from \Cref{tab:unweightedgraph}.} \label{picture}
\end{figure}
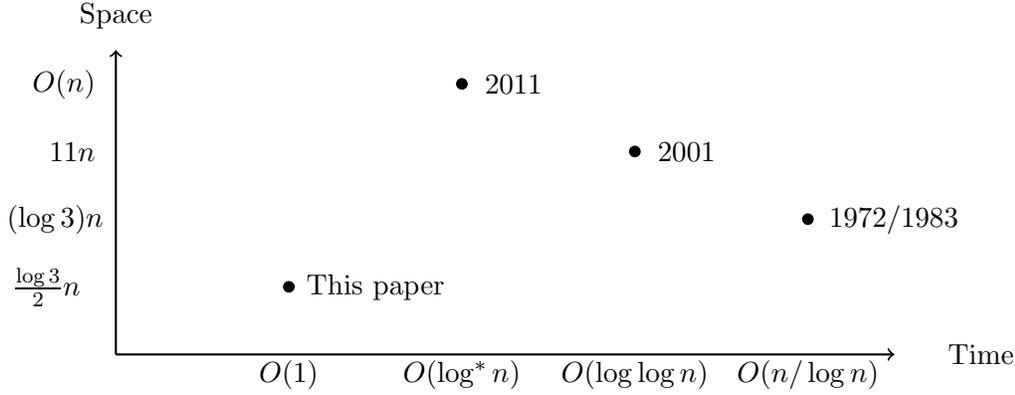

Distance labeling schemes for various families of graphs exist, e.g., for
trees~\cite{alstrupbillerauhe,peleg}, bounded tree-width~\cite{Gavoille200485}, distance-hereditary~\cite{GP03b}, bounded clique-width~\cite{CV03}, some non-positively
curved plane~\cite{CDV06}, interval~\cite{GP08} and
permutation graphs~\cite{BG09}. In~\cite{Gavoille200485} it is proved that distance labels require $\Theta(\log^2 n)$ bits for trees, $\Oh(\sqrt{n}\log n)$ and $\Omega(n^{1/3})$ bits for planar graphs, and $\Omega(\sqrt{n})$ bits for bounded degree graphs. In an unweighted graph, two nodes are adjacent iff their distance is $1$. Hence, lower bounds for adjacency labeling apply to distance labeling as well, and adjacency lower bounds can be achieved by reduction~\cite{KNR92} to induced-universal graphs, e.g. giving $\frac{n}{2}$ and $\frac{n}{4}$ for general and bipartite graphs, respectively. An overview of  adjacency labeling can be found in~\cite{AlstrupKTZ14}.

Various computability requirements are sometimes imposed on  labeling schemes~\cite{AKM01,KNR92,siamcompKatzKKP04}. This paper assumes the RAM model and mentions the time needed for decoding in addition to the label size. 

\subsection{Overview of  results}
For weighted graphs we assume integral edge weights from $[1,W]$. Letting each node save the distance to all other nodes would require a scheme with labels of size $\Oh(n\log (nW))$ bits. Let $\dist_G(x,y)$ denote the shortest distance in $G$ between nodes $x$ and $y$. An $r$-additive approximation scheme returns a value $\dist'_G(x,y)$, where $\dist_G(x,y)\leq \dist'_G(x,y) \leq \dist_G(x,y)+r$.

Throughout this paper we will assume that $\log W=o(\log n)$ since otherwise the  na\"ive solution mentioned above will be as good as our solution. Ignoring second order terms, we can for general weighted graphs and
constant decoding time achieve upper and lower bounds for label length
as stated in \Cref{tab:overview}.
For bipartite graphs we also show a lower bound of $\frac{1}{4}n\log\floor{2W/3 + 5/3}$ and an upper bound of $\frac{1}{2}n$ whenever $W=1$.

\begin{table}[h!] 
\centering
\makebox[0pt][c]{
\begin{tabular}{|c|c|c|}
\hline
\bf Problem & \bf Lower bound  & \bf Upper bound \\
\hline
General graphs  & $\frac{1}{2}(n-1)\log\ceil{W/2+1}$  & $\frac{1}{2}n\log (2W+1)$ \\
\hline
\end{tabular}
}
\caption{General graphs with weights from $[1,W]$, where $\log W=o(\log n)$. The upper bound has an extra $o(n)$ term, and decoding takes constant time.
}
\label{tab:overview}
\end{table}

We present, as stated in \Cref{table:secondterms}, several trade-offs between decoding time, edge weight $W$, and space needed for the second order term.  

\begin{table}[h!] 
\centering
\makebox[0pt][c]{
\begin{tabular}{|c|c|c|}
\hline
\bf  Time & \bf Second order term & \bf W \\
\hline
N/A & $\Oh(\log n \cdot \log (nW))$ & Any value \\
\hline
 $\Oh(n)$ & $\Oh(\log^2 n)$ &  $\Oh(1)$ \\
\hline
$\Oh(1)$ & $\Oh(\frac{n}{\log n}\log (2W+1)(\log \log n +\log W))$ & $2^{o(\log n)}$\\
\hline
\end{tabular}
}
\caption{Second order term for the upper bound for general graphs (in \Cref{tab:overview}). 
The results also hold for the $n/2$ labels in the unweighted, bipartite case.
 It may be possible to relax the restriction $W=\Oh(1)$ if the word "finite" in \Cref{access} below from~\cite{DPT10} does not mean ``constant''.}
\label{table:secondterms}
\end{table}

We also show that, for any $k,D\geq 0$ with $\log k=o(\log n)$ and $D\leq 2(k+1)W-1$, there exists a $(2kW+\ceil{\frac{D}{2(k+1)W-D}})$-additive distance scheme using labels of size $\frac{1}{2(k+1)}n \log (2(k+1)W+1-D) + \Oh(\log n \cdot \log (nW))$ bits. 

Finally, we present lower bounds for approximation schemes. In particular, for $r < 2W$ we prove that
labels of $\Omega(n\log{(W/(r+1))})$ bits are required for an $r$-additive distance labeling scheme.

\subsection{Approximate distance labeling schemes and oracles}
Approximate distance labeling schemes are well studied; see
e.g.,~\cite{Gavoille200485,GuptaKL03,gupta2005traveling,peleg,ThZw05}. For instance, graphs of doubling dimension~\cite{Talwar04} and planar graphs~\cite{Thorup2004distance} both enjoy schemes with polylogarithmic label length which return approximate distances below a $1+\varepsilon$ factor of the exact distance. Approximate schemes that return a small additive error have also been
investigated, e.g. in~\cite{CDEHV08,GL05,KL06}. In~\cite{GKKPP01}, lower and upper bounds for $r$-additive schemes, $r\leq 2$, are given for chordal, AT, permutation and interval graphs. 
 For general graphs the current best lower bound~\cite{GKKPP01} for $r \ge 2$-additive scheme is  $\Omega(\sqrt{n/r})$. For $r=1$, one needs $\frac{1}{4}n$ bits since a $1$-additive scheme can answer adjacency queries in bipartite graphs. Using our approximative result, we achieve, by setting $k=0$ and $D=W=1$,  a $1$-additive distance labeling scheme which, ignoring second order terms, has the same size (namely $\frac{1}{2}n$ bits) as an optimal adjacency labeling scheme. Somehow related, \cite{BCE03} studies labeling schemes that preserve exact distances between nodes with minimum distance $P$, giving an $O((n/P)\log^2{n})$ bit solution.

Approximate distance oracles introduced in~\cite{ThZw05} use a global table (not necessarily labels) from which approximate distance queries can be answered quickly. One can na\"ively use the $n$ labels in a labeling scheme as a distance oracle (but not vice versa). For unweighted graphs, we achieve constant query time for $1$-additive distance oracles using $\frac{1}{2}n^2+o(n^2)$ bits in total, matching (ignoring second order terms) the space needed to represent a graph. Other techniques only reduce space for $r$-additive errors for $r>1$.  For exact distances in weighted graphs, our solution achieves $\frac{1}{2}n^2 \log{(2W+1)} + o(n^2)$ bits for $\log W=o(\log n)$. This relaxes the requirement of $W=\Oh(1)$ in~\cite{FerraginaNV10} (and slightly improves the space usage in that paper). 
 
 \subsection{Second order terms are important}
 Chung's solution in~\cite{Chung90} gives labels of size $\log n+O(\log \log n)$ for adjacency
 labeling in trees, which was improved to $\log n + O(\log^* n)$  in~\cite{alstruprauhe} and in~\cite{bonichon2006short,Fraigniaud2009randomized,fraigniaudkorman2,KMS02}
 to $\log n + \Oh(1)$ for various special cases.
A recent STOC'15 paper~\cite{AlstrupKTZ14} improves label size for adjacency in generel graphs from $n/2+O(\log n)$ to $n/2+ O(1)$.
Likewise, the second order term for ancestor relationship is improved in a sequence of STOC/SODA papers~\cite{AKM01,AR02,AlstrupBR03,fraigniaudkorman2,fraigniaudkorman} (and~\cite{abiteboul})  to $\Theta(\log \log n)$, giving labels of size $\log n+\Theta(\log \log n)$.

Somewhat related, \emph{succinct data structures} (see, e.g.,~\cite{DPT10,FarzanM13,FarzanM14,MunroRRR12,patrascu08succinct})
 focus on the space used in addition to the information theoretic lower bound, which is often a lower order term with respect to the overall space used.

\subsection{Labeling schemes in various settings and applications}
By using labeling schemes, it is possible to avoid costly access to large global tables, computing instead locally and distributed. Such properties are used, e.g., in XML search engines~\cite{AKM01}, network routing and distributed algorithms~\cite{Cowen01,EilamGP03,throupzwick,ThZw05}, dynamic and parallel settings ~\cite{CohenKaplan2010,dynamicKormanP07}, graph representations~\cite{KNR92}, and other applications~\cite{siamcompKatzKKP04,Korman2010,peleg2,peleg,SK85}. From the SIGMOD, we see labeling schemes used in~\cite{AIY13,JRXL12} for shortest path queries and in~\cite{CHWWx13} for reachability queries.  Finally, we observe that compact $2$-hop labeling (a specific distance labeling scheme) is central for computing exact distances on real-world networks with millions of arcs in real-time~\cite{DGSW14}. 

\subsection{Outline of the paper} \Cref{sec:warmup} illustrates some of our basic techniques. \Cref{seclong,secshort} present our upper bounds for exact distance labeling schemes for general graphs. \Cref{secapprox} presents upper bounds for approximate distances. Our lower bounds are rather simple counting arguments with reduction to adjacency and have been placed in \Cref{SecLower}.

\section{Preliminaries}

\paragraph{Trees.}

Given a rooted tree $T$ and a node $u$ of $T$, denote by $T_u$ be the
subtree of $T$ consisting of all the descendants of $u$ (including
itself). The \emph{depth} of $u$ is the number of edges on the unique simple path from $u$
to the root of $T$. For any rooted subtree $A$ of $T$, denote by $\root(A)$
the root of $A$, as the node of $A$ with smallest depth. Denote by $A^* = A \setminus \set{\root(A)}$ the forest obtained from
$A$ by removing its root. Denote by $|A|$ the number of nodes of
$A$: hence, $|A^*|$ represents its number of edges.  Denote by
$\parent_T(u)$ the parent of the node $u$ in $T$. Let 
$T[u,v]$ denote the nodes on the simple path from $u$ to $v$ in $T$. The variants $T(u,v]$ and $T[u,v)$
denote the same path without the first and last node, respectively.

\paragraph{Graphs.}
Throughout we assume graphs to be connected. If a graph is not connected, we can add $O(\log{n})$ bits to each
label, indicating the connected component of the node, and then handle
components separately. We denote by $\dist_G(u,v)$ the minimum distance
(counted with edge weights) of a path in $G$ connecting the nodes $u$ and $v$.

\paragraph{Representing numbers and accessing them.}
We will need to encode numbers with base different from~$2$ and sometimes compute prefix sums on a sequence of numbers. We apply some existing results:

\begin{lemma} [\cite{Makinen2007}]\label{Oprefix}
  A table with $n$ integral entries in $[-k,k]$ can be represented in
  a data structure of $\Oh(n \log k)$ bits to support prefix sums in
  constant time.
\end{lemma}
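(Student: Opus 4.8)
The plan is to use a single level of blocking together with a precomputed lookup table, in the spirit of succinct rank/select data structures. First I would peel off the easy regime: if $\log(2k+1) > \tfrac14\log n$, then $\log(nk) = \Oh(\log k)$, so one can store all $n$ prefix sums explicitly at a fixed width of $\ceil{\log(2nk+1)} = \Oh(\log k)$ bits each, which uses $\Oh(n\log k)$ bits total and answers any query by a single array access. So assume from now on $\log(2k+1) \le \tfrac14\log n$. Storing each table entry with the offset $A[i]+k \in \set{0,\dots,2k}$, an entry occupies $\ceil{\log(2k+1)} = \Oh(\log n)$ bits, so in the RAM model (word size $\Omega(\log n)$) a handful of consecutive entries fit in $\Oh(1)$ words.

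Next I would fix a block length $b = \floor{(\tfrac14\log n)/\ceil{\log(2k+1)}} \ge 1$ and cut the array into $\ceil{n/b}$ consecutive blocks, so that the concatenation of the $b$ offset-entries of one block is a bit string of length at most $\tfrac14\log n$, i.e.\ a number $p < n^{1/4}$ (pad the last block with zeros). The structure has three parts: (i) the packed array of offset-entries, $\Oh(n\log k)$ bits; (ii) for each of the $\Oh(n/b) = \Oh(n\ceil{\log(2k+1)}/\log n)$ block boundaries, the exact prefix sum of $A$ up to that boundary, stored at the fixed width $\ceil{\log(2nk+1)} = \Oh(\log n)$, for a total of $\Oh(n\log k)$ bits; and (iii) one global table $P$ where $P[p][t]$, for a block pattern $p < n^{1/4}$ and an offset $t\in\set{1,\dots,b}$, equals the sum of the first $t$ offset-entries of $p$ minus $tk$ (i.e.\ the partial sum of the corresponding $A$-values). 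Since $P$ has $\Oh(n^{1/4}\cdot b)$ entries of $\Oh(\log(bk)) = \Oh(\log n)$ bits, it occupies $\Oh(n^{1/4}\log^2 n) = o(n)$ bits, and the whole structure is built in $\Oh(n)$ time by one left-to-right scan.

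To answer a prefix-sum query for index $j$, I would let $m=\ceil{j/b}$ be the block containing position $j$, retrieve the stored prefix sum at the end of block $m-1$ (taking it to be $0$ when $m=1$), read the $\Oh(1)$ words holding block $m$ and extract its $\le\tfrac14\log n$-bit pattern $p$, look up $P[p][\,j-(m-1)b\,]$, and return the sum of the two retrieved numbers; every step is $\Oh(1)$ word operations, so decoding is constant time. The one genuine point to get right is the parameter balancing: $b$ must be large enough that the $\Oh(n/b)$ stored block sums cost only $\Oh(n\log k)$ bits, yet small enough that a block pattern indexes an $o(n)$-bit table, and the choice $b=\Theta(\log n/\log(2k+1))$ does both precisely because --- once blocking is used at all --- $k$ is polynomially bounded in $n$ and hence $\log(nk)=\Theta(\log n)$; this is also why the regime $\log(2k+1)=\Omega(\log n)$ has to be separated out first. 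Everything else (fixed-width packing so that the block-sum table and the packed array are $\Oh(1)$-time addressable, reading $b$ consecutive entries with $\Oh(1)$ word operations, and reading ``$\log k$'' as $\Theta(\log(2k+1))$ so that $k\in\set{0,1}$ are covered) is routine RAM bookkeeping.
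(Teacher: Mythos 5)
The paper does not prove this lemma---it is imported verbatim from the cited reference [M\"akinen--Navarro 2007]---so there is no in-paper argument to compare against; your construction is correct and is essentially the standard one underlying that citation (pack the entries at fixed width, store exact prefix sums at $\Theta(\log n/\log(2k+1))$-spaced block boundaries, and resolve the in-block remainder with an $n^{1/4}$-indexed lookup table), including the necessary separation of the regime where $\log(2k+1)=\Omega(\log n)$ and explicit storage already suffices. The only nit is that your case split should be stated with $\ceil{\log(2k+1)}$ rather than $\log(2k+1)$ so that $b\ge 1$ genuinely holds, a constant-factor adjustment.
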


\begin{lemma}[\cite{DPT10}]\label{access}
  A table with $n$ elements from a finite alphabet $\sigma$ can be represented
  in a data structure of $\ceil{n \log |\sigma|}$ bits, such that any element of the table can
  be read or written in constant time. The data structure requires
  $\Oh(\log n)$ precomputed word constants.
\end{lemma}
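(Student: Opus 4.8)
\emph{Proof proposal.} Since the statement is quoted from~\cite{DPT10}, the plan is to reconstruct that construction, organised into two layers. Fix the word size $w\ge\log n$ and set $b=\floor{w/\log|\sigma|}$, so that $B:=|\sigma|^{b}\le 2^{w}$ fits into a single machine word. First I would partition the table $A$ with entries $A[0],\dots,A[n-1]$ into $m=\ceil{n/b}$ consecutive blocks and encode the $j$-th block as the mixed-radix integer $C_{j}=\sum_{t}A[jb+t]\cdot|\sigma|^{t}\in[B]$. Given the powers $|\sigma|^{0},|\sigma|^{1},\dots,|\sigma|^{b-1}$ as precomputed constants — there are $\Oh(b)=\Oh(\log n)$ of them, which is exactly the claimed budget — a single symbol is read as $A[i]=\floor{C_{\floor{i/b}}/|\sigma|^{i\bmod b}}\bmod|\sigma|$ and written by the analogous constant-time word update on $C_{\floor{i/b}}$. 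So the task reduces to storing the sequence $C_{0},\dots,C_{m-1}$ of word-sized integers from $[B]$ in $\ceil{n\log|\sigma|}$ bits with constant-time access to any $C_{j}$; note that $\ceil{m\log B}$ already lies within $w=\Oh(\log n)$ bits of $\ceil{n\log|\sigma|}$, and the last (possibly short) block is encoded separately with exactly as many bits as it needs so the count becomes exactly $\ceil{n\log|\sigma|}$.

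The second layer is where the ``changing base without losing space'' device of~\cite{DPT10} enters, namely a \emph{spillover representation}: a value $x\in[M]$ with $M\le 2^{\Oh(w)}$ is stored as a pair $(\mathrm{mem},\mathrm{spill})$ with $\mathrm{mem}\in\{0,1\}^{r}$ and $\mathrm{spill}\in[K]$ for a small carry alphabet $K$ and $r=\ceil{\log(M/K)}$, the memory part living in the dense bit array while the spillover is passed on. Sweeping the blocks left to right, one combines the incoming spillover $s_{j-1}\in[K_{j-1}]$ with $C_{j}\in[B]$ into $x_{j}=s_{j-1}+K_{j-1}\cdot C_{j}\in[K_{j-1}B]$, writes its memory part, and emits $s_{j}\in[K_{j}]$, carrying the ``fractional'' information forward in $s_{j}$ so that the ceiling is paid only once, at the very end, where $s_{m-1}$ is stored explicitly; the total then telescopes to $\ceil{n\log|\sigma|}$ bits. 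Each combine/split is a constant number of multiplications, divisions and shifts on $\Oh(1)$ words, which is constant time once the relevant magic constants (reciprocals for the fixed divisors $M,K$, etc.) are precomputed.

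The hard part is getting \emph{constant} query time rather than the $\Oh(\log n)$ that a naive tree of spillovers would give: reading $C_{j}$ must not require replaying the spillover chain from block $0$. The fix is to make the bit position at which $\mathrm{mem}_{j}$ begins an $\Oh(1)$-time-computable function of $j$ — essentially affine in $j$ up to a bounded correction stored in a short precomputed table, because the per-block lengths $r_{j}$ range over a bounded set — so that one jumps directly to block $j$, reads its memory word and its short spillover, reconstructs $x_{j}$ and hence $C_{j}$, and extracts the symbol as above; writes propagate only $\Oh(1)$ further, since a change in one $C_{j}$ perturbs the spillover only within the current group before it is reabsorbed. A second, orthogonal subtlety is the one flagged in the footnote to this lemma: the analysis uses that $\sigma$ is ``finite'', i.e. that $b=\omega(1)$, to keep the per-block overhead negligible, and extending to $|\sigma|=2^{o(\log n)}$ (the regime $\log W=o(\log n)$ in the applications) is exactly the point at which one must recheck that the bounds and constant count of~\cite{DPT10} still go through.
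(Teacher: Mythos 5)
The paper gives no proof of this lemma: it is imported verbatim from~\cite{DPT10} as a black box (the authors even hedge on its exact hypotheses in the caption of \Cref{table:secondterms}). So the only question is whether your reconstruction of the~\cite{DPT10} argument is sound. Your first layer --- packing $b=\floor{w/\log|\sigma|}$ symbols into a mixed-radix word $C_j$, with the powers $|\sigma|^0,\dots,|\sigma|^{b-1}$ among the $\Oh(\log n)$ precomputed constants and read/write by division and remainder --- is correct and is indeed the standard reduction to word-sized super-symbols.

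The gap is in the second layer. In the left-to-right chain you describe, $x_j=s_{j-1}+K_{j-1}C_j$ is recovered from $(\mathrm{mem}_j,s_j)$, and $s_j$ is in turn only recoverable as $x_{j+1}\bmod K_j$, i.e.\ from block $j{+}1$'s data, and so on to the end of the array where $s_{m-1}$ is stored explicitly. Decoding $C_j$ therefore requires the \emph{values} of all downstream spillovers, not merely the \emph{positions} of the memory fields; making the bit offset of $\mathrm{mem}_j$ an $\Oh(1)$-computable function of $j$ does not help, because $s_{j-1}$ and $s_j$ are data-dependent. Symmetrically, a single write to $C_j$ changes $x_j$, hence $s_j$, hence $x_{j+1}$, and cascades through $\Theta(m)$ blocks. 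Your assertion that the perturbation is ``reabsorbed within the current group'' is precisely the statement that needs proof, and it is false for the linear chain as written: nothing in your construction bounds the depth of the spillover dependency. Resolving this tension --- zero redundancy appears to force entanglement across blocks, while $\Oh(1)$ access appears to forbid it --- is the entire technical content of~\cite{DPT10}; their aggregation is arranged so that any single query and update touch only $\Oh(1)$ spillovers, with spillover universes large enough that the $\Oh(1/K)$ redundancy per aggregation step still sums to less than one bit overall. Without that structure, your argument yields an $\ceil{n\log|\sigma|}$-bit encoding with $\Theta(n\log|\sigma|/w)$-time access, not constant time. (Your closing caveat about ``finite'' meaning constant-size alphabets does correctly match the paper's own reservation.)
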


\begin{lemma} [simple arithmetic coding]\label{deltaer}
  A table with $n$ elements from an alphabet $\sigma$ can be represented
  in a data structure of $\ceil{n \log |\sigma|}$ bits.
\end{lemma}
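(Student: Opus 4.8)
The plan is to use straightforward positional (fixed‑radix) encoding, which is exactly the degenerate, uniform‑model case of arithmetic coding. Fix an arbitrary bijection between $\sigma$ and $\set{0,1,\dots,|\sigma|-1}$, so that a table $(a_1,\dots,a_n)\in\sigma^n$ corresponds to a tuple of base‑$|\sigma|$ digits $(d_1,\dots,d_n)$. Associate to this tuple the single integer
\[
N \;=\; \sum_{i=1}^{n} d_i\,|\sigma|^{\,i-1} \;\in\; \set{0,1,\dots,|\sigma|^n-1}.
\]
Because base‑$|\sigma|$ representations of length $n$ are unique, the map $(a_1,\dots,a_n)\mapsto N$ is a bijection onto $\set{0,\dots,|\sigma|^n-1}$, hence lossless.

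Next I would store $N$ in ordinary binary. Since $0\le N\le |\sigma|^n-1$, the value $N$ fits in $\ceil{\log(|\sigma|^n)}=\ceil{n\log|\sigma|}$ bits, which is precisely the claimed bound. When $|\sigma|$ is a power of two this equals the naive $n\log|\sigma|$; the ceiling (and the point of the lemma) matters exactly when $|\sigma|$ is not a power of two, where packing the digits independently would waste up to one bit per entry.

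For completeness one records the decoding direction: from the binary representation of $N$ the digits are recovered by $d_i=\floor{N/|\sigma|^{\,i-1}}\bmod|\sigma|$ — equivalently by repeated integer division of $N$ by $|\sigma|$ — and applying the inverse bijection returns the table entries. Note that, unlike \Cref{access}, the statement makes no claim about constant‑time random access to an individual entry; it is purely an assertion about the size of the representation, so nothing further needs to be verified. There is essentially no obstacle here — the only thing to be careful about is keeping the digit‑to‑position convention consistent between the encoding sum and the decoding formula.
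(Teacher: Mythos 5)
Your proof is correct and is exactly the standard argument behind the lemma's ``simple arithmetic coding'' tag (the paper itself gives no proof, treating it as folklore): interpret the table as a base-$|\sigma|$ integer $N\in\{0,\dots,|\sigma|^n-1\}$ and store it in $\ceil{\log(|\sigma|^n)}=\ceil{n\log|\sigma|}$ bits. You also correctly note that no random-access guarantee is claimed, which is the distinction from \Cref{access}.
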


\section{Warm-up} \label{sec:warmup}
This section presents, as a warm-up, a distance labeling scheme which does not achieve the strongest combination of label size and decoding time, but which uses some of the techniques that we will employ later to achieve our results.
For nodes $x,u,v$, define
\[
\delta_x(u,v) = \dist_G(x,v) - \dist_G(x,u).
\]
Note that the triangle inequailty entails that
\[
-\dist_G(u,v)\leq \delta_x(u,v)\leq \dist_G(u,v).
\]
In particular, $\delta_x(u,v)\in [-W,W]$ whenever $u,v$ are adjacent.

Given a a path $v_0,\dots ,v_t$ of nodes in $G$, the telescoping property of $\delta_x$-values means that
\[
\delta_x(v_0,v_t) = \sum_{i=1}^{t}\delta_x(v_{i-1},v_i).
\]
Since $v_{i-1}$ and $v_i$ are adjacent, we can encode the $\delta_x$-values above as a table with $t$ entries, in which each entry is a an element from the alphabet $[-W,W]$ with $2W+1$ values. Using Lemma~\ref{deltaer} we can encode this table with $\ceil{t\log(2W+1)}$ bits. 
Note that we can 
compute $\dist_G(x,v_t)$ from $\dist_G(x,v_0)$ by adding a prefix sum of the sequence of $\delta_x$-values: 
\[
\dist_G(x,v_t)=\dist_G(x,v_0)+\sum_{i=1}^t\delta_x(v_{i-1},v_i).
\]

The \emph{Hamiltonian number} of $G$ is the number $h(G)$ of edges of a Hamiltonian walk in $G$, i.e.\ a closed walk of minimal length (counted without weights) that visits every node in $G$. It is well-known that $n\leq h(G)\leq 2n-2$, the first inequality being an equality iff $G$ is Hamiltonian, and the latter being an equality iff $G$ is a tree (in which case the Hamiltonian walk is an Euler tour); see~\cite{MR2082480,goodman-hedetniemi1973}.

Consider a Hamiltonian walk $v_0,\dots ,v_{h-1}$ of length $h=h(G)$. Given nodes $x,y$ from $G$, we can find $i,j$ such that $x=v_i$ and $y=v_j$. Without loss of generality we can assume that $i\leq j$. If $j \leq i+h/2$, we can compute $\dist_G(x,y)$ as the sum of at most $\floor{h/2}$ $\delta_x$-values:
\[
\dist_G(x,y) = \dist_G(v_i,v_j)=\sum_{k=i}^{j-1}\delta_{x}(v_k,v_{k+1}).
\]
If, on the other hand, $j> i+h/2$, then we can compute $\dist_G(x,y)$ as the sum of at most $\floor{h/2}$ $\delta_y$-values:
\[
\dist_G(x,y) = \dist_G(v_j,v_i)  
= \sum_{k=j}^{i-1}\delta_{y}(v_k,v_{k+1}),
\]
where we have counted indices modulo $h$ in the last expression.
This leads to the following distance labeling scheme. For each node $x$ in $G$, assign a label $\lab(x)$ consisting of
\begin{itemize}
\item a number $i \in [0,h-1]$ such that $x=v_i$; and
\item the $\floor{h/2}$ values $\delta_x(v_k,v_{k+1})$ for $k=i,\dots ,i+\floor{h/2}-1\pmod {h}$.
\end{itemize}
From the above discussion it follows that the labels $\lab(x)$ and $\lab(y)$ for any two nodes $x,y$ are sufficient to compute $\dist_G(x,y)$.

We can encode $\lab(x)$ with $\ceil{\frac{1}{2}h\log (2W+1)} +\ceil{\log h}$ bits using Lemma~\ref{deltaer}. If $G$ is Hamiltonian, this immediately gives a labeling scheme of size $\ceil{\frac{1}{2}n\log(2W+1)}+\ceil{\log n}$. In the general case, we get size
$\ceil{(n-1)\log(2W+1)}+\ceil{\log n}$, which for $W=1$ matches Winkler's~\cite{winkler} result when disregarding second order terms.
\Cref{theo:esben} in the next section shows that it is possible to obtain labels of size $\frac{1}{2}n\log(2W+1)+O(\log n \cdot \log (nW))$ even in the general case. \Cref{constant} in the section that follows shows that we can obtain constant time decoding with $o(n)$ extra space.

\section{A scheme of size $\frac{1}{2}n\log(2W+1)$} \label{seclong}

We now show how to construct a distance labeling scheme of size $\frac{1}{2}n\log(2W+1)+\Oh(\log n\cdot\log(nW))$. 

First, we recall the \emph{heavy-light decomposition} of trees~\cite{sleatortarjan}. 
Let $T$ be a rooted tree. The nodes of $T$ are classified as either \emph{heavy} or \emph{light} as follows. The root $r$ of $T$ is light. For each non-leaf node $v$, pick one child  $w$ where $|T_w|$ is maximal among the children of $v$ and classify it as heavy; classify the other children of $v$ as light. 
The \emph{apex} of a node $v$ is the nearest light ancestor of $v$.
By removing the edges between light nodes and their parents, $T$ is divided into a collection of \emph{heavy paths}. Any given node $v$ has at most $\log n$ light ancestors (see~\cite{sleatortarjan}), so the path from the root to $v$ goes through at most $\log n$ heavy paths.

Now, enumerate the nodes in $T$ in a depth-first manner where heavy children are visited first. Denote the number of a node $v$ by $\dfs(v)$. Note that nodes on a heavy path will have numbers in consecutive order; in particular, the root node $r$ will have number $\dfs(r)=0$, and the nodes on its heavy path will have numbers $0,1,\dots$. Assign to each node $v$ a label $\lab_T(v)$ consisting of the sequence of $\dfs$-values of its first and last ancestor on each heavy path, ordered from the top of the tree and down to $v$. Note that the first ancestor on a heavy path will be the apex of that heavy path and will be light, whereas the last ancestor on a heavy path will be the parent of the apex of the subsequent heavy path. This construction is similar to the one used in~\cite{alstrupnca2014} for nearest common ancestor (NCA) labeling schemes, although with larger sublabels. Indeed, the label $\lab_T(v)$ is a sequence of at most $2\log n$ numbers from $[0,n[$. We can encode this sequence with $O(\log^2 n)$ bits.

Suppose that the node $v$ has label $\lab_T(v)=(l_1,h_1,\dots ,l_t,h_t)$, where $l_1=\dfs(r)=0$ and $h_t=\dfs(v)$ and where $l_i,h_i$ are the numbers of the first and last ancestor, respectively, on the $i$'th heavy path visited on the path from the root to $v$. Since nodes on heavy paths are consecutively enumerated, it follows that the nodes on the path from the root to $v$ are enumerated
\[
0=l_1,\dots ,h_1,l_2,\dots ,h_2,\dots ,l_t,\dots,h_t,
\]
where duplicates may occur in the cases where $l_i=h_i$, which happens when the first and last ancestor on a heavy path coincide.

In addition to the label $\lab_T(v)$, we also store the label $\lab_T'(v)$ consisting of the sequence of distances $\dist_G(l_i,v)$ and $\dist_G(h_i,v)$. This label is a sequence of at most $2\log n$ numbers smaller than $nW$, and hence we can encode $\lab'_T(v)$ with $\Oh(\log n\cdot \log(nW))$ bits. Combined, $\lab_T(v)$ and $\lab'_T(v)$ can be encoded with $\Oh(\log n\cdot \log(nW))$ bits. 

Now consider a connected graph $G$ with shortest-path tree $T$ rooted at some node $r$. Using the above enumeration of nodes, we can construct a distance labeling scheme in the same manner as in  \Cref{sec:warmup}, except that instead of using a Hamiltonian path, we use the $\dfs$-enumeration of nodes in $T$ from above, and we save only $\delta_x$-value between nodes and their parents, using $\ceil{\frac{1}{2}n\log(2W+1)}$ bits due to \Cref{deltaer}. More specifically, for each node $x$, we assign a label $\lab(x)$ consisting of
\begin{itemize}
\item the labels $\lab_T(x)$ and $\lab'_T(x)$ as described above; and
\item the $\floor{n/2}$ values $\delta_x(\parent(v),v)$ for all $v$ with $\dfs(x) < \dfs(v)\leq \dfs(x)+\floor{n/2} \pmod n$.
\end{itemize}
We can encode the above with $\frac{1}{2}n\log(2W+1)+O(\log n\cdot \log(nW))$ bits.

Given nodes $x\neq y$, either $\lab(x)$ will contain $\delta_x(\parent(y),y)$ or $\lab(y)$ will contain $\delta_y(\parent(x),x)$. Without loss of generality, we may assume that $\lab(x)$ contains $\delta_x(\parent(y),y)$. Let $z$ denote the nearest common ancestor of $x$ and $y$. Note that $z$ must be the last ancestor of either $x$ or $y$ on some heavy path, meaning that $\dfs(z)$ appears in either $\lab_T(x)$ or $\lab_T(y)$. By construction of depth-first-search, a node $v$ on the path from (but not including) $z$ to (and including) $y$ will have a $\dfs$-number $\dfs(v)$ that satisfies the requirements to be stored in $\lab(x)$. Thus, $\lab(x)$ must, in fact, contain $\delta_x$-values for all nodes in $T_{(z,y]}$.

Next, note that, since $T$ is a shortest-path tree, $\dist_G(x,z)= \dist_T(x,z)$. Now, if $z$ appears in $\lab_T(x)$, we can obtain $\dist_T(x,z)$ directly from $\lab'_T(x)$; else, $z$ must appear in $\lab_T(y)$, and we can then obtain $\dist_T(z,y)$ from $\lab_T'(y)$ and compute $\dist_T(x,z)=\dist_T(x,r)-\dist_T(r,y)+\dist_T(z,y)$. In either case, we can now compute the distance in $G$ between $x$ and $y$ as
\[
\dist_G(x,y) = \dist_G(x,z) + \sum_{v\in T(z,y]} \delta_x(\parent(v),v).
\]
The label of $x$ contains all the needed $\delta_x$-values, and $\lab_T(x)$  and $\lab_T(y)$ combined allows us to determine the $\dfs$-numbers of the nodes on $T(z,y]$, so that we know exactly which $\delta_x$-values from $x$'s label to pick out. Thus we have proved:
\begin{theorem} \label{theo:esben}
There exists a distance labeling scheme for graphs with label size $\frac{1}{2}n\log (2W+1) +O(\log n\cdot \log(nW))$.
\end{theorem}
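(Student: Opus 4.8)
The plan is to build the labels on top of a shortest-path tree of $G$ together with the heavy-light decomposition, mimicking the Hamiltonian-walk scheme of \Cref{sec:warmup} but replacing the walk by a carefully chosen depth-first enumeration so that only $\floor{n/2}$ parent-to-child $\delta$-values per node are needed.

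First I would fix any shortest-path tree $T$ of $G$ rooted at an arbitrary node $r$, compute its heavy-light decomposition, and enumerate the nodes by a depth-first traversal that always descends into the heavy child first. This ordering has two features we exploit: the nodes of any single heavy path receive consecutive $\dfs$-numbers, and the root-to-$v$ path passes through at most $\log n$ heavy paths, so it is determined by at most $2\log n$ "endpoints'' (the first and last ancestor of $v$ on each heavy path it meets). I would then attach to each node $x$ the list $\lab_T(x)$ of the $\dfs$-numbers of these endpoints and the list $\lab'_T(x)$ of the $T$-distances from each such endpoint to $x$. The first list is at most $2\log n$ numbers from $[0,n[$, costing $O(\log^2 n)$ bits; the second is at most $2\log n$ numbers below $nW$, costing $O(\log n\cdot\log(nW))$ bits; both are packed with \Cref{deltaer}.

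The core of the label of $x$ is the cyclic block of $\floor{n/2}$ values $\delta_x(\parent(v),v)\in[-W,W]$ for the $\floor{n/2}$ nodes $v$ immediately following $x$ in the circular $\dfs$-order. Each entry comes from an alphabet of size $2W+1$, so \Cref{deltaer} packs them in $\ceil{\frac{1}{2}n\log(2W+1)}$ bits, which gives the claimed total together with the $O(\log n\cdot\log(nW))$ overhead. For decoding two nodes $x\neq y$: one of the two half-blocks covers the other node, so WLOG $\lab(x)$ stores $\delta_x(\parent(y),y)$; let $z$ be the nearest common ancestor of $x$ and $y$ in $T$. I would argue that $\dfs(z)$ is among the endpoints recorded in $\lab_T(x)$ or $\lab_T(y)$ (it is the last ancestor of $x$ or of $y$ on some heavy path), and that, because $\dfs$ visits heavy children first, every node of $T(z,y]$ falls in the same cyclic window as $y$ — hence $\lab(x)$ in fact stores $\delta_x(\parent(v),v)$ for all $v\in T(z,y]$, and $\lab_T(x)$ together with $\lab_T(y)$ names exactly which stored entries these are. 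Since $T$ is a shortest-path tree, $\dist_G(x,z)=\dist_T(x,z)$ is read off directly from $\lab'_T(x)$ if $z$ is one of $x$'s endpoints, and otherwise reconstructed as $\dist_T(x,r)-\dist_T(r,y)+\dist_T(z,y)$ using $\lab'_T(y)$; telescoping of $\delta_x$ along the tree path then yields
\[
\dist_G(x,y)=\dist_G(x,z)+\sum_{v\in T(z,y]}\delta_x(\parent(v),v).
\]

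The step I expect to be the main obstacle is the combinatorial claim that whichever of $x,y$ has the other in its forward window of $\floor{n/2}$ nodes also has all of $T(z,y]$ in that window, and that the recorded endpoints suffice both to locate $z$ and to index the correct $\delta$-values. This rests entirely on the interaction between the cyclic $\dfs$-order and the heavy-path structure (consecutiveness of heavy paths, the $\le\log n$ bound on light ancestors), so the verification is a careful case analysis of where $x$, $y$, and their nearest common ancestor sit in the traversal; the remaining ingredients — bit accounting, applying \Cref{deltaer}, and the telescoping identity — are routine.
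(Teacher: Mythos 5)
Your proposal matches the paper's proof essentially step for step: the same shortest-path tree with heavy-light decomposition and heavy-first $\dfs$-numbering, the same sublabels $\lab_T$ and $\lab'_T$, the same cyclic half-window of $\floor{n/2}$ parent $\delta$-values, and the same decoding via the nearest common ancestor $z$ and telescoping over $T(z,y]$. The combinatorial claim you flag (that the half-window containing $y$ contains all of $T(z,y]$) is asserted at the same level of detail in the paper and does hold, so the proposal is correct and not a different route.
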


This gives us the first row of~\Cref{table:secondterms}. To obtain the second row, we encode the $\delta_x$ values with \Cref{access}. Doing this we can access each value in constant time and simply traverse in $\Oh(n)$ time the path from $y$ to $z$, adding $\delta_x$ values along the way. Note, however, that \Cref{access} only applies for $W=\Oh(1)$. 
Saving the $\delta_x$-values in a prefix sum structure as described in \Cref{Oprefix}, we can  compute the sum using $\log n$ look-ups. The next section describes how we can avoid spending $O(\log n)$ time (or more) on this, while still keeping the same label size.

For unweighted ($W=1$), bipartite graphs, $\delta_x$-values between adjacent nodes can never be $0$, which means that we only need to consider two rather than three possible values. Thus, we get label size $\frac{1}{2}n +O(\log^2 n)$ instead in this case. 
We shall give no further mention to this in the following.

\section{Constant query time}\label{secshort}
Let $T$ be any rooted spanning tree of the connected graph $G$ with $n$ nodes. We create an edge-partition $\sT = \set{T_1, T_2, \dots}$ of $T$ into rooted subtrees, called \emph{micro trees}. Each micro tree has at most $\beta$ edges, and the number of micro trees is $|\sT| = O(n/\beta)$. We later choose the value of $\beta$. For completeness we give a proof (Lemma~\ref{lempart}) in the appendix of the existence of such a construction. Observe that the collection $\set{T_i^*}_{i\ge 1}$ forms a partition
of the nodes of $T^*$. As the parent relationship in $T_i$ coincides with the one of $T$, we have  $\parent_{T_i}(u) = \parent_T(u)$ for all $u \in T^*_i$.

For every node $u\in T^*$, we denote by $i(u)$ the unique index $i$ such that $u \in T^*_i$. For a node $u$ of $T^*$ we let $\MicroRoot(u)=\root(T_{i(u)})$, and for $r=\root(T)$ let $\MicroRoot(r)=r$.

Define the \emph{macro tree} $M$ to have node set $\{\MicroRoot(u)\mid u\in G\}$ and an edge between $\MicroRoot(u)$ and $\MicroRoot(\MicroRoot(u))$ for all $u\neq r$.

By construction, $M$ has $O(n/\beta)$ nodes.

Our labeling scheme will compute the distance from $x$ to $y$ as
\[
\dist_G(x,y)=\dist_G(x,r) + \delta_x(r,\MicroRoot(y))+\delta_x(\MicroRoot(y),y).
\]
The first addend, $\dist_G(x,r)$, is saved as part of $x$'s label using $\log n+\log W$ bits. The second addend can be computed as a sum of $\delta_x$-values for nodes in the macro tree and is hence referred to as the \emph{macro sum}. The third addend can be computed as a sum of $\delta_x$-values for nodes inside $y$'s micro tree and is hence referred to as the \emph{micro sum}. The next two sections explain how to create data structures that allow us to compute these values in constant time.

\subsection{Macro sum}
Consider the macro tree $M$ with $\Oh(n/\beta)$ nodes. As mentioned in Section~\ref{sec:warmup} there exists a Hamiltonian walk $v_0,\dots ,v_{h-1}$ of length $h=\Oh(n/\beta)$, where we can assume that $v_0=r$.
Given nodes $x,y\in G$, consider a path in $M$ along such a Hamiltonian walk from $r$ to $\MicroRoot(y)$. This is a subpath $v_0,\dots , v_t$ of the Hamiltonian walk, where $t$ is chosen such that $v_t=\MicroRoot(y)$. Note that 
\[
\delta_x(r,\MicroRoot(y)) = \delta_x(v_0,v_t) = \sum_{i=0}^{t-1} \delta_x(v_i,v_{i+1}).
\]
Since each edge in $M$ connects two nodes that belong to the same micro tree, and the distance within each micro tree is a most $\beta W$, we have that $\delta_x(v_i,v_{i+1})\in [-\beta W,\beta W]$ for all $i$. Using Lemma~\ref{Oprefix} we can store these $\delta_x$-values in a data structure, $\PreFix_x$, of size $\Oh((n/\beta)\log(2\beta W+1)) = \Oh(n\log(\beta W)/\beta)$ such that prefix sums can be computed in constant time. This data structure is stored in $x$'s label. An index $t$ with $v_t=\MicroRoot(y)$ is stored in $y$'s label using $\Oh(\log(n/\beta))$ bits. These two pieces of information combined allow allows us to compute $\delta_x(r,\MicroRoot(y))$ for all $y$.

\emph{Label summary:} 
For a (pre-selected) Hamiltonian walk $v_0,\dots ,v_{h-1}$ in $M$, we store in the label of each node $x$ a datastructure $\PreFix_x$ of size $\Oh(n\log(\beta W)/\beta)$ such that prefix sums in the form $\sum_{i=0}^{t-1} \delta_x(v_i,v_{i+1})$ can be computed in constant time.
In addition, we store in the label of $x$ an index $m(x)$ such that $v_{m(x)}=\MicroRoot(x)$, which requires $\Oh(\log(n/\beta))$ bits. 
 
\subsection{Micro sum}
For any node $v\neq r$, define
\[
\delta_x(v) = \delta_x(\parent_T(v),v)
\]
Note that, for a node $y\in T^*_i$, $\delta_x(\MicroRoot(y),y)$ is the sum of the values $\delta_x(v_j)$ for all nodes $v_j\in T_i^*$ lying on the path from $\MicroRoot(y)$ to $y$. Each of these $\delta_x$-values is a number in $[-W ,W]$.

For each $i$,  order the nodes in $T_i^*$ in any order. For each node $x$ and index $i$, let $\delta_x(T^*_i)  = (\delta_x(v_1),\dots ,\delta_x(v_{|T_i^*|}))$, where $v_1,\dots ,v_{|T^*_i|}$ is the ordered sequence of nodes from $T_i^*$. We will construct our labels such that $x$'s label stores $\delta_x(T^*_i)$ for half of the total set of delta values (we will see how in the next section), and such that $y$'s label stores information about for which $j$'s the node $v_j$ lies on the path between $\MicroRoot(y)$ and $y$. With these two pieces of information, we can compute $\delta_x(\MicroRoot(y),y)$ as described above.

We define $f(W)=2W+1$. The sequence $\delta_x(T_i^*)$ consists of $|T_i^*|$ values from $[-W,W]$ can be encoded with $|T_i^*|\ceil{\log f(W)}$ bits. To store this more compactly, we will use an injective function, as described in Lemma~\ref{deltaer} that maps every sequence of $t$ integers from $[-W,W]$ into a bit string of length $\ceil{t\log{f(W)}}$. Denote by $\code(\delta_x(T^*_i))$  such an encoding of the sequence $\delta_x(T^*_i)$ to a bit string oflength $\ceil{|T_i^*|\log f(W)}\leq \ceil{\beta \log f(W)} $, as $|T_i^*|\leq \beta $

In order to decode the encoded version of $\delta_x(T^*_i)$ in constant time, we construct a tabulated inverse function $\code^{-1}$. 
From the input and output sizes, we see that we need a table with $2^{\ceil{\beta \log{(f(W))}}}$ entries, for each of the $\beta$ possible micro tree sizes, and each result entry having $\beta \ceil {\log f(W)}$ bits, giving a total space of $\beta 2^{\ceil{\beta \log{f(W)}}} \beta \ceil {\log f(W)}$ bits. 

Let $T_i=T_{i(y)}$. Let $\&$ be the bitwise AND operator. In node $y$'s label we save the bit string $\mask(y)$ such that $\mask(y)$ $\&$ $\delta_x(T^*_i)$ gives an integer sequence $S$ identical to $\delta_x(T^*_i)$, except that the integer $\delta_x(v)$ has been replaced by $0$ for all $v$ that are not an ancestor of $y$. Given $S$ we can now compute the micro sum $\delta_x(\MicroRoot(y),y)$ as the sum of integers in the sequence $S$. We will create a tabulated function that sums these integers, $\SumIntegers$. 
$\SumIntegers$ is given a sequence of up to $\beta$ values in $[-W,W]$, and the output is a number in $[-\beta W,\beta W]$. We can thus tabulate $\SumIntegers$ as a table with $\beta 2^{\beta \ceil{\log f(W)}}$ entries each of size $\ceil{\log f(\beta W)}$, giving a total space of $\beta 2^{\beta \ceil{\log f(W)}}\ceil{\log f(\beta W)}$.

Both functions, $\code^{-1}$ and $\SumIntegers$, have been tabulated in the above. A lookup in a tabulated function can be done in constant time on the RAM as long as both input and output can be represented by $\Oh(\log n)$ bits. We can achieve this by setting
\[
\beta \leq \frac{c \log n}{\ceil{\log f(W)}}
\]
for a constant $c$.
To see this, note that the maximum of the four input and output values above is $\ceil{\log \beta}+\beta \ceil{\log f(W)}$. Using the above inequality then gives $\log \log n+c \log n=\Oh(\log n)$.

The tables for the tabulated functions are the same for all nodes. Hence, in principle, assuming an upper bound for $n$ is known, we could encode the two tables in global memory, not using space in the labels. However, as we will see, the tables take no more space than the prefix table $\PreFix_x$, so we can just as well encode them into the labels. Doing that we use an additional $\beta 2^{\ceil{\beta \log{(f(W))}}} \beta \ceil {\log f(W)}$ for the $\code^{-1}$ table and $\beta 2^{\beta \ceil{\log f(W)}} \ceil{\log f(W\beta)}$ for the $\SumIntegers$ table. Using that $W =o(n)$ and substituting $\beta$ for the above expression then gives, after a few reductions, that the extra space used is no more than $O((\log n)^4 n^c)$ bits. Since the prefix table uses at least $\Oh(\frac{n \log \log n}{\log n})$ bits, we see that the added space does not (asymptotically) change the total space usage, as long as we choose $c<1$.

\emph{Label summary:} We will construct the labels such that either $x$'s label contains $\delta_x(T^*_i(y))$ or vice versa (we shall see how in the next section). Using the tabulated function $\code^{-1}$, the bits in $\delta_x(T^*_i(y))$ can be extracted in constant time from $x$'s label. Using $\mask(y)$ from $y$'s label and the tabulated function $\SumIntegers$, we can then compute $\delta_x(\MicroRoot(y),y)$ in constant time. The total space used for all this is no more than $\Oh(\frac{n \log \log n}{\log n})$.

\subsection{Storing and extracting the deltas}
Let the micro trees in $\sT$ bee given in a specific order: $T_1,\dots , T_{|\sT|}$.
Let $D(x) = \code(\delta_x(T^*_1)) \concats \code(\delta_x(T^*_{|\sT|}))$ denote the binary string composed of the concatenation of each
string $\code(\delta_x(T^*_i))$ in the order $i=1,2,\dots,|\sT|$. 

Let $L = |D(x)|$ be the length in bits of $D(x)$. Let $p_i \in
[0,L)$ be the position in the string $D(x)$ where the substring
$\code(\delta_x(T^*_i))$ starts. E.g., $D(x)[0] = D(x)[p_1]$ is the
first bit of $\code(\delta_x(T^*_1))$, $D(x)[p_2]$ the first bit of
$\code(\delta_x(T^*_2))$, and so on. 
According to \Cref{deltaer} we have $p_i = \sum_{j<i}
\ceil{|T^*_j|\log{f(W)}}$. Observe that the position
$p_i$ only depends on $i$ and $W$ and not on $x$.

We denote by $a(y)$ and $a'(y)$ the starting and ending positions
of the substring $\code(\delta_x(T^*_{i(y)}))$ in
$D(x)$. More precisely, $a(y) = p_{i(y)}$ and $a'(y) = p_{i(y)+1}-1$, so
that $|\code(\delta_x(T^*_{i(y)}))| = a'(y) - a(y)+1$. For each node $y$ we use $\Oh(\log n)$ bits to store  $a(y)$ and $a'(y)$ in its label. 

For a node $x$ we will only save approximate half of $D(x)$, in a table $H(x)$. $H(x)$ will start with $\code(\delta_x(T^*_{i(x)}))$ and the code for the following micro trees in the given circular order until $H(x)$ in total has at least $n/ 2$ $\delta_x$ values, but as few a possible.  In other words $H(x) =\code(\delta_x(T^*_{i(x)}))  \concats \code(\delta_x(T^*_{j(x)}))$ where the indexes $i(x),
i(x)+1, \dots, j(x)$ may wrap to~$1$ after reaching the largest index
$|\sT|$ if $j(x) < i(x)$. Let $b(x) = p_{j(x)+1}$. 

In a node $x$'s label we save $a(x), a'(x), b(x)$ and  $L$ using $\Oh(\log n)$ bits. Having those values we know which $\delta_x$ values from $D(x)$ are saved in $x$'s label as well as the position of them in $H(x)$. Furthermore we know the position of the $\delta_x$-values of $x$'s own micro tree in $D(x)$.
 We will need to extract at most $\ceil{\beta \log f(W)}=\Oh(\log n)$ consecutive bits from $H(x)$ in one query. On the word-RAM this can be done in constant time.

\begin{proposition}\label{prop:H}
  Let $x,y$ be two nodes of $G$. Then,
  \beginsmall{itemize}
\item[\rm(i)] $|H(x)|=
  \frac{1}{2}n\log{f(W)} + O(\frac{n}{\log n}\log f(W))$; and 
\item[\rm(ii)] either $\code(\delta_x(T^*_{i(y)}))$ is part of $H(x)$ or $\code(\delta_y(T^*_{i(x)}))$ is part of $H(y)$.
  \endsmall{itemize}
\end{proposition}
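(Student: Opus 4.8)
The plan is to prove the two parts separately, with part~(i) being a direct size estimate and part~(ii) a counting/pigeonhole-style argument on the circular arrangement of micro trees.

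For part~(i), I would start from the definition $H(x) = \code(\delta_x(T^*_{i(x)})) \concats \cdots \concats \code(\delta_x(T^*_{j(x)}))$, where the indices run circularly and the construction stops as soon as the total number of $\delta_x$-values collected is at least $n/2$ (but is minimal subject to this). First, the number of $\delta_x$-values in $H(x)$ is at least $n/2$ by construction, and at most $n/2 + \beta$, since adding the last micro tree overshoots $n/2$ by at most $|T^*_{j(x)}| \le \beta$. Next, by \Cref{deltaer} each block $\code(\delta_x(T^*_i))$ has length $\ceil{|T^*_i|\log f(W)}$, which is at most $|T^*_i|\log f(W) + 1$; summing over the at most $|\sT| = O(n/\beta)$ blocks appearing in $H(x)$ gives $|H(x)| \le (\text{number of deltas})\cdot \log f(W) + O(n/\beta) \le \tfrac12 n\log f(W) + \beta\log f(W) + O(n/\beta)$. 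Finally, substituting the chosen value $\beta = \Theta(\log n / \ceil{\log f(W)})$ turns both error terms into $O(\tfrac{n}{\log n}\log f(W))$, which is the claimed bound. (A matching lower bound $|H(x)| \ge \tfrac12 n\log f(W) - O(\tfrac{n}{\log n}\log f(W))$ is not strictly needed for the statement as written, but would follow symmetrically by ignoring the ceilings.)

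For part~(ii), the key observation is that $H(x)$ collects the micro-tree codes starting at index $i(x)$ and moving forward circularly until the running count of $\delta_x$-values first reaches $n/2$; in particular, the set of indices covered by $H(x)$ is an arc $\mathcal{A}_x$ of the circular order starting at $i(x)$, and the total number of deltas over the micro trees whose index lies in $\mathcal{A}_x$ is at least $n/2$. I would argue by contradiction: suppose $i(y)\notin\mathcal{A}_x$ and $i(x)\notin\mathcal{A}_y$. Let $N_i = |T^*_i|$ be the number of deltas in micro tree $i$, so $\sum_i N_i = |T^*| = n-1$. Consider the circular sequence of indices. If $i(y)\notin\mathcal{A}_x$, then the arc from $i(x)$ around to $i(y)$ (exclusive of $i(y)$) already accounts for $\ge n/2$ of the deltas — wait, more carefully: $\mathcal{A}_x$ is the minimal arc starting at $i(x)$ with delta-count $\ge n/2$, so if $i(y)$ is not in it, the arc strictly between the end of $\mathcal{A}_x$ and $i(x)$ (going forward, i.e., containing $i(y)$) carries $< n/2 + N_{\text{last}}$… the cleaner formulation: the complement arc $\mathcal{C}_x$ consisting of indices strictly after $\mathcal{A}_x$ up to but not including $i(x)$ has delta-count $\le (n-1) - n/2 < n/2$, and $i(y)\in\mathcal{C}_x$. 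Symmetrically $i(x)$ lies in a complement arc $\mathcal{C}_y$ with delta-count $< n/2$. The two arcs $\mathcal{C}_x$ (ending just before $i(x)$, containing $i(y)$) and the arc $\mathcal{A}_y$ (starting at $i(y)$, containing… ) — I would set it up so that $\mathcal{A}_x$ (from $i(x)$, not reaching $i(y)$) and $\mathcal{A}_y$ (from $i(y)$, not reaching $i(x)$) are two disjoint arcs of the circle, each with delta-count $\ge n/2$, hence total $\ge n$; but the total is only $n-1$, a contradiction. The disjointness is exactly what the two failure assumptions give: $\mathcal{A}_x$ runs from $i(x)$ forward and stops before hitting $i(y)$, and $\mathcal{A}_y$ runs from $i(y)$ forward and stops before hitting $i(x)$, so as arcs on the circle they do not overlap.

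The main obstacle I anticipate is getting the off-by-one bookkeeping right in part~(ii): the arcs $\mathcal{A}_x$ and $\mathcal{A}_y$ are half-open in a specific way (each includes its start micro tree $T^*_{i(x)}$ resp. $T^*_{i(y)}$), and one must verify that the "stops before hitting the other start index" property combined with the "$\ge n/2$" minimality property genuinely forces disjoint arcs summing past $n-1$. In particular one should double-check the edge case where $i(x) = i(y)$ (then trivially $\code(\delta_x(T^*_{i(y)})) = \code(\delta_x(T^*_{i(x)}))$ is the first block of $H(x)$, so (ii) holds outright) and the case where one arc wraps all the way around. I would handle these by noting that $\mathcal{A}_x$ never contains all $|\sT|$ indices unless $n/2 > n-1$, which is impossible for $n\ge 2$, so each arc is a proper arc and the disjointness argument applies. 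Everything else is routine: the inequality $\ceil{z} \le z+1$, the bound $|\sT| = O(n/\beta)$ from \Cref{secshort}, and the substitution of $\beta$.
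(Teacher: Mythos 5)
Your proposal is correct and follows essentially the same route as the paper: part (i) is the identical computation (ceilings cost at most $|\sT|$ extra bits, the last micro tree overshoots $n/2$ by at most $\beta$ deltas, then substitute $\beta$), and part (ii) is the same cyclic pigeonhole argument that the paper states in one sentence, which you have merely spelled out via the disjoint-arcs contradiction against the total of $n-1$ deltas. No gaps; your handling of the edge cases and the one-sided nature of the bound in (i) matches what the paper actually proves.
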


\begin{proof}
  Let $\sT '$ be the subset of $\sT$ encoded in  $H(x)$. We have:
\begin{eqnarray*}
  |H(x)| &=& \sum_{{T_i} \in \sT '} \ceil{|T^*_i| \log{f(W)}} ~<~ 
  \sum_{{T_i} \in \sT '} \pare{|T^*_i| \log{f(W)} + 1} \\
  &<& \frac{1}{2}n \log{f(W)} + |\sT| +\ceil{\beta \log f(W)}~<~ \frac{1}{2}n\log{f(W)} + O(n/\beta + \beta\log{f(W)}) \\
  &<& \frac{1}{2}n\log{f(W)} +  O(\frac{n}{\log n}\log f(W))
\end{eqnarray*}
which proves (i). Part (ii) follows from the fact that $x$ saves at least half of the $\delta_x$'s in a cyclic order. If $y$ not is include here, $x$ must be included in the $\delta_y$-values saved by $y$.
\end{proof}

\subsection{Summary}
The label of $x$ is composed of the follows items.

\begin{enumerate}

\item The values $a(x)$, $a'(x)$, $\mask(x)$, $m(x)$, $\dist_G(x,r)$, $L$ and $b(x)$: $O(\log{n})$.

\item A prefix table, $\PreFix_x$, for the values in the macro tree: $O(\frac{n}{\log n}((\log f(W))^2+\log \log n \log f(W)))$. 

\item The table $H(x)$: $\frac{1}{2}n\log{f(W)} +  O(\frac{n}{\log n}\log f(W))$.

\item Global tables, $\code^{-1}$ and $\SumIntegers$ of size $O(\frac{n
    \log \log{n}}{\log n})$.
\end{enumerate}

Note that $L$ and the global tables are common to all the nodes. In addition we may need to use $\Oh(\log n)$ bits to save the start position in the label for the above constant number of sublabels. 

\begin{lemma}\label{lem:length}
  Every label has length at most $\frac{1}{2}n\log{f(W)} +
  O(\frac{n}{\log n}(\log^2 W+\log \log n \log f(W)))$ bits.
\end{lemma}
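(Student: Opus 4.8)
The plan is to add up the four contributions listed in the summary and check that the dominant lower-order term is the one claimed in the lemma. So the proof is essentially bookkeeping: substitute the chosen value of $\beta$ and simplify.

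First I would recall the constraint $\beta = \Theta(\log n / \ceil{\log f(W)})$, equivalently $\beta \ceil{\log f(W)} = \Theta(\log n)$ and $n/\beta = \Theta(n\ceil{\log f(W)}/\log n)$. Then I would bound each of items 1--4 of the summary in terms of $n$, $\log n$ and $\log f(W) = \log(2W+1)$. Item~1 (together with the $O(\log n)$ bits for start positions mentioned at the end) is $O(\log n)$, which is absorbed into the claimed error term since $n/\log n \ge \log n$. Item~3 is handled directly by \Cref{prop:H}(i): it is $\tfrac12 n\log f(W) + O\!\big(\tfrac{n}{\log n}\log f(W)\big)$, so it already contains the main term plus an error of the claimed shape. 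Item~4 is $O(n\log\log n/\log n)$ by the analysis just before the label summary of the micro-sum subsection. Item~2 is the prefix table $\PreFix_x$: from the macro-sum subsection it has size $O((n/\beta)\log(2\beta W + 1))$, and with $\log(2\beta W+1) = O(\log\beta + \log f(W)) = O(\log\log n + \log f(W))$ and $n/\beta = O(n\log f(W)/\log n)$ this becomes $O\!\big(\tfrac{n}{\log n}\log f(W)(\log\log n + \log f(W))\big) = O\!\big(\tfrac{n}{\log n}((\log f(W))^2 + \log\log n\,\log f(W))\big)$, exactly as written in item~2.

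Next I would combine the bounds. Summing, the main term $\tfrac12 n\log f(W)$ comes only from item~3, and all the error contributions are $O\!\big(\tfrac{n}{\log n}((\log f(W))^2 + \log\log n\,\log f(W))\big)$ — item~2 matches this exactly, items~3 and~4 are smaller, and item~1 is $O(\log n)$, which is dominated. Finally I would observe that $(\log f(W))^2 = (\log(2W+1))^2 = \Theta(\log^2 W)$ for $W \ge 1$ and $\log f(W) = O(\log W + 1)$, so the error term can be rewritten as $O\!\big(\tfrac{n}{\log n}(\log^2 W + \log\log n\,\log f(W))\big)$, which is the statement of \Cref{lem:length}.

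I do not expect a genuine obstacle here — the work has all been done in the preceding subsections, and this lemma is just the accounting step that collects it. The one point requiring a little care is making sure the prefix-table bound in item~2 is stated with $\log f(W)$ (not merely $\log W$) inside, since for $W=1$ one has $\log W = 0$ but $\log f(W) = \log 3 \ne 0$; keeping the $\log f(W)$ factor in the mixed term $\log\log n\,\log f(W)$ (rather than collapsing it to $\log\log n\,\log W$) is what makes the bound correct in the unweighted case, and it is harmless since $\log f(W) = O(\log W)$ whenever $W \ge 2$. With that caveat noted, the sum of the four items gives the claimed bound.
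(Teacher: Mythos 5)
Your proof is correct and is exactly the argument the paper intends: the lemma is stated without an explicit proof precisely because it is the sum of the four itemized contributions, with the prefix table (item 2) supplying the dominant second-order term and Proposition \ref{prop:H}(i) supplying the main term. Your remark about keeping $\log f(W)$ rather than $\log W$ in the mixed term so that the bound survives the case $W=1$ is a valid and worthwhile point of care (note only that your passing claim $(\log f(W))^2=\Theta(\log^2 W)$ fails literally at $W=1$, but as you observe the constant-size square term is then absorbed by the $\log\log n\,\log f(W)$ term, so the stated bound holds).
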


Let $\decode(\ell(x,G),\ell(y,G))$ denote the distance returned by the
decoder given the labels of $x$ and of $y$ in $G$. It is defined by:

\medskip

\begin{center} 
\fbox{
\begin{minipage}{.92\textwidth}
\vspace{1ex}
$\decode(\ell(x,G),\ell(y,G))$: 
\beginsmall{enumerate}
\item If $(a(x)\leq a(y)<b(x)) \vee (b(x)<a(x)\leq a(y)) \vee (a(y)<b(x)<a(x))$ then $s=a(y)-a(x) \pmod L$ and $e=a'(y)-a(x) \pmod L$
\item Else return $\decode(\ell(y,G),\ell(x,G))$
\item $\MacroSum=\PreFix_x(m(y))$
\item $S=\code^{-1}(H_x[s, \dots ,e])$ $\&$ $\mask(y)$
\item $\MicroSum=\SumIntegers(S)$
\item Return $\dist_G(x,r) + \MicroSum + \MacroSum$
\endsmall{enumerate}
\end{minipage}
}
\end{center}

\begin{theorem}\label{constant}
  There exists a distance labeling scheme for graphs with edge weights in $[1,W]$ using labels of length $\frac{1}{2}n\log{(2W+1)} +
  O(\frac{n}{\log n}\log (2W+1)(\log W+\log \log n ))$ bits and constant decoding time.
\end{theorem}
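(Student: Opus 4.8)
The plan is to assemble the pieces built in the preceding subsections: the label format was fixed in the ``Summary'' paragraph, its length was bounded in \Cref{lem:length}, and the decoder $\decode$ was displayed above. What remains is to (i) rewrite the length bound of \Cref{lem:length} in the stated form, (ii) show that $\decode(\ell(x,G),\ell(y,G)) = \dist_G(x,y)$, and (iii) check that every line of $\decode$ runs in constant time on the word-RAM.

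For (i), I would note that with $f(W) = 2W+1$ we have $1 \le \log W \le \log f(W)$, hence $\log^2 W \le \log W\cdot\log f(W)$; substituting this into the bound of \Cref{lem:length} turns it into $\frac{1}{2}n\log(2W+1) + O(\frac{n}{\log n}\log(2W+1)(\log W + \log\log n))$, which is exactly the claimed length.

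For (ii), the backbone is the telescoping identity $\dist_G(x,y) = \dist_G(x,r) + \delta_x(r,z) + \delta_x(z,y)$, valid for every node $z$ since $\delta_x(u,v) = \dist_G(x,v) - \dist_G(x,u)$; I apply it with $z = \MicroRoot(y)$. First I would check that the recursion in lines~1--2 halts after at most one step: the disjunction tested in line~1 is exactly the condition that $\code(\delta_x(T^*_{i(y)}))$ occurs as a substring of $H(x)$, i.e.\ that $[a(y),a'(y)] \subseteq [a(x),b(x))$ cyclically in $D(x)$ --- here one uses that $a(x),b(x),a(y)$ and $a'(y)+1$ are all block boundaries $p_i$ of $D(x)$, so no partial overlap can occur --- and by \Cref{prop:H}(ii) this holds for $(x,y)$ or for $(y,x)$. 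Once in the branch where $\code(\delta_x(T^*_{i(y)}))$ lies inside $H(x)$, I would verify the three summands: line~3 gives $\MacroSum = \PreFix_x(m(y)) = \sum_{i=0}^{m(y)-1}\delta_x(v_i,v_{i+1}) = \delta_x(v_0,v_{m(y)}) = \delta_x(r,\MicroRoot(y))$, telescoping along the Hamiltonian walk $v_0,\dots,v_{h-1}$ of $M$ with $v_0 = r$ and $v_{m(y)} = \MicroRoot(y)$; line~4 reads $H_x[s,\dots,e] = \code(\delta_x(T^*_{i(y)}))$ (the offsets $s,e$ being $a(y),a'(y)$ translated into $H(x)$'s coordinates), applies $\code^{-1}$ to recover the sequence $\delta_x(T^*_{i(y)})$, and ANDs it with $\mask(y)$, so that $S$ equals $\delta_x(T^*_{i(y)})$ on the coordinates of ancestors of $y$ in $T_{i(y)}$ and is $0$ elsewhere; line~5 sums $S$ to obtain $\MicroSum = \sum_{v\in T(\MicroRoot(y),y]}\delta_x(\parent_T(v),v) = \delta_x(\MicroRoot(y),y)$, telescoping along the tree path from $\MicroRoot(y)$ to $y$. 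Line~6 then returns $\dist_G(x,r) + \delta_x(\MicroRoot(y),y) + \delta_x(r,\MicroRoot(y)) = \dist_G(x,y)$.

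For (iii), I would go line by line: line~1 is a constant number of comparisons and modular subtractions; line~3 is one prefix-sum query, constant time by \Cref{Oprefix}; line~4 extracts $a'(y)-a(y)+1 = O(\beta\log f(W)) = O(\log n)$ consecutive bits of $H(x)$, does one lookup in the tabulated $\code^{-1}$, and one bitwise AND, all on $O(\log n)$-bit words; line~5 is one lookup in the tabulated $\SumIntegers$; line~6 is two additions. The one real subtlety --- the choice $\beta \le c\log n/\ceil{\log f(W)}$ with $c<1$, which simultaneously keeps all tabulation indices and outputs within $O(\log n)$ bits (so these lookups are genuine $O(1)$ RAM operations) and keeps the global tables $\code^{-1},\SumIntegers$ down to $o(\frac{n}{\log n}\log f(W))$ bits, negligible against $H(x)$ --- was already settled in the micro-sum subsection. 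I therefore expect no genuine obstacle to remain at this stage: all the delicate balancing has been done in \Cref{lem:length} and in the micro-sum tabulation, and the proof of the theorem is just the bookkeeping that ties them to a correct, constant-time decoder.
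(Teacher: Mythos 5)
Your proposal is correct and follows essentially the same route as the paper, which in fact states \Cref{constant} without a separate proof because it is exactly the assembly of \Cref{prop:H}, \Cref{lem:length}, the macro/micro-sum constructions and the displayed decoder that you carry out. (One cosmetic slip: ``$1\le\log W$'' fails at $W=1$, but the inequality you actually need, $\log^2 W\le\log W\cdot\log f(W)$, still holds since $0\le\log W\le\log f(W)$.)
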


\section{Approximate distances}  \label{secapprox}

By considering only a subset of nodes from $G$ and using the previous techniques, it is possible to create an approximation scheme where the label size is determined by a smaller number of nodes but with larger weights between adjacent nodes. We leave the details for \Cref{nodeapprox} and present here only the result.

\begin{theorem} 
There exists a $(2kW)$-additive distance labeling scheme for graphs with $n$ nodes and edge weights in $[1,W]$ using labels of size $\frac{1}{2(k+1)}n\log (2(k+1)W+1) +O(\log n\cdot \log(nW))$.
\end{theorem}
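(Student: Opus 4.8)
The plan is to derive this $(2kW)$-additive scheme by applying the exact scheme of \Cref{constant} (or \Cref{theo:esben}) not to all of $G$, but to a carefully chosen ``net'' of nodes, paying for the approximation with the increased edge weights this contraction induces. First I would select a subset $N$ of the nodes such that every node of $G$ is within distance $kW$ (in $G$) of some node of $N$, and such that $N$ forms a connected ``skeleton'' when we contract each node of $G\setminus N$ to its nearest net node. Concretely, take a shortest-path (BFS-layered) spanning tree $T$ of $G$ rooted at $r$, and let $N$ consist of the root together with every node whose depth in $T$ is a multiple of $k+1$; then every node is within $k$ tree-edges, hence within $kW$ distance, of its nearest ancestor in $N$, and there are at most roughly $n/(k+1)$ such net nodes along each root-to-leaf path, so $|N| = n/(k+1) + O(\cdots)$ after accounting for branching. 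Contract each maximal subtree hanging below a net node (down to but not including the next net node) into that net node; this yields a graph $G'$ on $N$ whose edges correspond to paths of at most $2(k+1)-1$ original edges (a node of $N$ to its nearest net-ancestor/descendant), hence of weight at most $(2(k+1)-1)W < 2(k+1)W$. So $G'$ is a connected graph on $\sim n/(k+1)$ nodes with integral edge weights bounded by $W' = 2(k+1)W$ --- actually one must be slightly careful to get exactly $2(k+1)W+1$ values, which is why the statement has $\log(2(k+1)W+1)$; the key point is $\delta$-values between adjacent net nodes lie in $[-W',W']$.

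Next I would run the scheme of \Cref{constant} on $G'$. This produces labels of size $\frac{1}{2}|N|\log(2W'+1) + O(\log|N|\cdot\log(|N|W'))$; substituting $|N| \le n/(k+1)$ and $W' \le 2(k+1)W$, and using $\log k = o(\log n)$ so that $\log|N| = \Theta(\log n)$ and $\log(|N|W') = O(\log(nW))$, gives the claimed $\frac{1}{2(k+1)}n\log(2(k+1)W+1) + O(\log n\cdot\log(nW))$. Each original node $x$ also stores, in $O(\log(nW))$ bits, the identity of its net representative $\nu(x)\in N$ and the exact tree-distance $\dist_T(x,\nu(x)) = \dist_G(x,\nu(x))$ (exact because $T$ is a shortest-path tree and $\nu(x)$ is an ancestor of $x$). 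To answer a query on $x,y$, the decoder recovers $d' = \dist_{G'}(\nu(x),\nu(y))$ from the embedded $G'$-labels and returns $\dist_G(x,\nu(x)) + d' + \dist_G(y,\nu(y))$.

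The correctness analysis is the heart of the argument. For the upper direction: any $x$-$y$ path can be routed $x \to \nu(x) \to \nu(y) \to y$ where the middle leg follows a shortest $G'$-walk, each of whose contracted edges expands to an actual $G$-path of the stated length; since all these are genuine $G$-walks, the returned value is $\ge \dist_G(x,y)$ only if $\dist_{G'}$ overestimates contracted distances, which it does not --- so I must instead argue the returned value is an \emph{over}estimate of $\dist_G(x,y)$ and separately that it is not too large. The clean way: (a) $\dist_G(x,y) \le \dist_G(x,\nu(x)) + \dist_{G'}(\nu(x),\nu(y)) + \dist_G(\nu(y),y)$ because the right side is the length of an actual walk, giving $\dist_G(x,y) \le \text{answer}$; (b) conversely, take a shortest $x$-$y$ path $P$ in $G$; I claim $\dist_{G'}(\nu(x),\nu(y)) \le \dist_G(x,y) + 2kW - \dist_G(x,\nu(x)) - \dist_G(\nu(y),y)$, i.e. the answer is at most $\dist_G(x,y) + 2kW$. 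The idea is that walking along $P$ and recording the sequence of net representatives of the visited nodes yields a walk in $G'$ from $\nu(x)$ to $\nu(y)$; consecutive distinct representatives $u,u'$ along this sequence are net nodes whose subtrees are adjacent or nested, so the $G'$-edge $uu'$ has weight at most the corresponding portion of $P$ plus a slack bounded by two ``detours up to a net ancestor,'' each costing at most $kW$, and these slacks telescope/cancel against the $\dist_G(x,\nu(x))$ and $\dist_G(\nu(y),y)$ terms except for a net residual of $2kW$.

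I expect the main obstacle to be exactly this last bound --- making the routing of $P$ into a $G'$-walk rigorous and pinning down that the total accumulated slack is precisely $2kW$ and not, say, $2kW$ times the number of net nodes on $P$. The resolution is that the slack is incurred only \emph{once at each end}: interior net representatives along $P$ are visited by $P$ itself (since $P$ passes through the region), so the contracted edges between them cost no more than the sub-path of $P$ they replace; the only genuine detours are getting from $x$ up to $\nu(x)$ and from $y$ up to $\nu(y)$, which is why the additive error is $2kW$ rather than something depending on $\dist_G(x,y)$. I would also need the auxiliary lemma that such a net $N$ with $|N| = n/(k+1) + O(\text{second order})$ and the stated connectivity/weight properties exists --- this is the layered-tree construction above and is routine --- and I would defer the bookkeeping of second-order terms and the precise $2(k+1)W+1$ alphabet size to the referenced \Cref{nodeapprox}.
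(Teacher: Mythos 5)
Your overall plan --- subsample a net $N$ of roughly $n/(k+1)$ nodes from a shortest-path tree, handle the net exactly, and pay $O(kW)$ for snapping $x$ and $y$ to their net ancestors --- is also the paper's plan, but the step you yourself flag as ``the main obstacle'' is where the argument genuinely breaks, and your proposed resolution does not repair it. If $G'$ is the quotient graph whose edge $uu'$ is weighted by the path (net node of $u$) $\to$ (crossing $G$-edge) $\to$ (net node of $u'$), then each such edge overestimates the crossing edge by up to $2kW$, and this loss is incurred at \emph{every} blob transition along a shortest $x$--$y$ path $P$, not once per endpoint: the blob roots $\nu(p)$ of interior nodes $p$ of $P$ need not lie on $P$ at all, so your claim that ``contracted edges between interior representatives cost no more than the sub-path of $P$ they replace'' is false. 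Concretely, with $k=W=1$, take a shortest path $p_0,\dots,p_m$ all of whose nodes sit at depth $3$ in $T$ with distinct parents $q_i$ at depth $2$ (the selected residue class); every $G'$-edge $q_iq_{i+1}$ then has weight $3$, so your decoder returns $3m+2$ against a true distance of $m$, an error of $\Theta(kWm)$. (Weighting $G'$-edges by the crossing edge alone instead makes $\dist_{G'}$ an \emph{under}estimate, violating $\dist'\ge\dist$.) The paper avoids this entirely by never computing distances in a contracted graph: it stores, for the query node $x$, the \emph{exact} values $\delta_x(\parent_{T(k)}(v),v)=\dist_G(x,v)-\dist_G(x,\parent_{T(k)}(v))$ for half of the net nodes $v$, plus exact distances from $x$ to the relevant net ancestors. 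These telescope along the $T(k)$-path from the net NCA $z'$ down to $y'$ to give $\dist_G(x,y')$ exactly; the only approximation is the single substitution $y\mapsto y'$, and returning $kW+\dist_G(x,y')$ lands in $[\dist_G(x,y),\dist_G(x,y)+2kW]$. That the $\delta$-values are anchored at the actual node $x$ (distances in $G$, not in $G'$) is precisely what your black-box invocation of \Cref{constant} on $G'$ loses.

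Two smaller problems: (1) taking $N$ to be the nodes at depth divisible by $k+1$ does not give $|N|\le n/(k+1)+o(n)$ (almost all nodes could sit at depth exactly $k+1$); you must pick, by pigeonhole, the residue class of depth modulo $k+1$ with the fewest nodes, as in the paper's \Cref{kapprox}. (2) Consecutive net nodes on a root-to-leaf path are exactly $k+1$ tree edges apart, so the relevant $\delta$-values lie in $[-(k+1)W,(k+1)W]$ and the alphabet has size $2(k+1)W+1$; your bound $W'=2(k+1)W$ would put an extra bit per net node into the main term.
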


Another way to achieve an approximation scheme is to use a smaller set of weights while keeping the accumulated error under control. This leads to the following result whose proof can be seen in \Cref{numberapprox}.

\begin{theorem} \label{approxdelta}
For any $D\leq 2W-1$ there exists a $\ceil{\frac{D}{2W-D}}$-additive distance labeling scheme for graphs with $n$ nodes and edge weights in $[1,W]$ using labels of size $\frac{1}{2}n \log (2W+1-D) +O(\log n\cdot \log(nW))$.
\end{theorem}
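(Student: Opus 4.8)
The plan is to run the distance‑labeling scheme of \Cref{theo:esben} essentially unchanged, but to replace each exact sublabel $\delta_x(\parent_T(v),v)\in[-W,W]$, which costs $\log(2W+1)$ bits, by a \emph{coarsened} value $\tilde\delta_x(v)$ drawn from an alphabet of only $2W+1-D$ symbols. The two things to preserve are (i) exactness of the telescoping identity, so that $\sum_{v\in T(z,y]}\tilde\delta_x(v)$ still equals a single well‑defined number $\tilde d(x,y)-\tilde d(x,z)$, and (ii) the sandwich $\dist_G(x,y)\le\tilde d(x,y)\le\dist_G(x,y)+\ceil{D/(2W-D)}$. Storing the $\tilde\delta_x(v)$'s with \Cref{deltaer} in the same positions the exact values occupied before lowers the label length to $\ceil{\floor{n/2}\log(2W+1-D)}+O(\log n\cdot\log(nW))$, which is the stated bound.

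\textbf{The coarsened values.} Keep the rooted tree $T$ of \Cref{theo:esben}, put $m:=\ceil{\tfrac{D}{2W-D}}+1$, and pick a residue $c\in\{0,\dots,m-1\}$ to be fixed in a moment. For a source $x$, round distances upward in a phase‑consistent way: let $\tilde d(x,v)$ be the least integer $\ge\dist_G(x,v)$ with $\tilde d(x,v)\equiv c\,(\mathrm{depth}_T(v)-\mathrm{depth}_T(x))\pmod m$, and set $\tilde\delta_x(v):=\tilde d(x,v)-\tilde d(x,\parent_T(v))$. Then $0\le\tilde d(x,v)-\dist_G(x,v)\le m-1=\ceil{D/(2W-D)}$, which is (ii); and since $\mathrm{depth}_T$ increases by $1$ along each tree edge, $\tilde\delta_x(v)\equiv c\pmod m$ for \emph{every} node $v$ and \emph{every} source $x$, while also $\tilde\delta_x(v)\in[-W-(m-1),\,W+(m-1)]$. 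So every value that ever appears lies in one fixed residue class modulo $m$ inside an interval of $2W+2m-1$ consecutive integers; by pigeonhole one residue class meets that interval in at most $\floor{(2W+2m-1)/m}=2+\floor{(2W-1)/m}$ points, and I would take $c$ to be a residue realizing this minimum. Telescoping from $z$ to $y$ is exact by the definition of $\tilde\delta_x$, so (i) holds as well.

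\textbf{The arithmetic that makes it fit.} The fact behind all this is the identity $\floor{(2W-1)/(2W-D)}=\ceil{D/(2W-D)}$, valid for $0\le D\le 2W-1$ (set $s=2W-D$, write $D=qs+t$ with $0\le t<s$, and treat $t=0$ and $t\ge1$ separately). Since $m=\ceil{D/(2W-D)}+1>\tfrac{2W-1}{2W-D}$, it gives $\floor{(2W-1)/m}<2W-D$, hence $\floor{(2W-1)/m}\le 2W-1-D$, so the chosen residue class has at most $2+(2W-1-D)=2W+1-D$ elements: that is the alphabet size, while $m-1=\ceil{D/(2W-D)}$ is exactly the additive error. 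For $D=0$ this is $m=1$ and reduces to \Cref{theo:esben}; for $D=W=1$ it is $m=2$, the phase is tree‑depth parity, and rounding up to fix parity forces $\tilde\delta_x(v)\in\{-1,1\}$ everywhere --- the $n/2$‑bit $1$‑additive scheme advertised in the introduction.

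\textbf{Decoding, and the hard part.} Decoding is that of \Cref{theo:esben} plus one ingredient. Given $x,y$, locate $z=\mathrm{NCA}(x,y)$ from $\lab_T(x),\lab_T(y)$; in exactly the case analysed there, $x$'s label stores $\tilde\delta_x(v)$ for all $v\in T(z,y]$, and $\dist_G(x,z)$ is recovered exactly from $\lab_T'(x)$ and $\lab_T'(y)$ just as in \Cref{theo:esben}, after augmenting those sublabels with $\mathrm{depth}_T$ of the $O(\log n)$ special ancestors and of $x,y$ (an extra $O(\log^2 n)$ bits). The decoder then returns $\tilde d(x,z)+\sum_{v\in T(z,y]}\tilde\delta_x(v)$, where $\tilde d(x,z)$ is obtained from the exact value $\dist_G(x,z)$ together with the residue $c\,(\mathrm{depth}_T(z)-\mathrm{depth}_T(x))\bmod m$; by (i)--(ii) this equals $\tilde d(x,y)\in[\dist_G(x,y),\dist_G(x,y)+\ceil{D/(2W-D)}]$. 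I expect the main obstacle to be exactly the joint choice of $m$ and $c$: $m$ must be small enough to meet the error bound yet, relative to $W$, large enough that some residue class modulo $m$ fits inside $2W+1-D$ symbols --- the identity above is what reconciles the two requirements. Everything else is a mechanical adaptation of \Cref{theo:esben}.
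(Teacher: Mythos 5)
Your proof is correct, but it takes a genuinely different route from the paper's. The paper works with a sub-alphabet $I'\subseteq[-W,W]$ of size $2W+1-D$ that contains both endpoints and whose gaps of consecutive missing integers have length at most $Q=\ceil{D/(2W-D)}$; it then rounds each $\delta_x(v)$ into $I'$ \emph{adaptively}, top-down in the tree, choosing the nearest element of $I'$ above or below depending on the error already accumulated at $\parent(v)$, and proves by induction that the accumulated error stays in $[0,Q]$. You instead round \emph{obliviously}: each prefix distance $\dist_G(x,v)$ is rounded up to a prescribed residue class mod $m=Q+1$ that shifts by $c$ per unit of tree depth, so every increment $\tilde\delta_x(v)$ automatically lands in a single residue class mod $m$ inside the enlarged window $[-W-(m-1),W+m-1]$, and a pigeonhole-plus-arithmetic argument (your identity $\floor{(2W-1)/(2W-D)}=\ceil{D/(2W-D)}$, which checks out) caps that class at $2W+1-D$ symbols. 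The trade-offs: the paper's alphabet stays inside $[-W,W]$ and needs no modular bookkeeping at decode time, but the encoder's choice at each node depends on its ancestors; your rounding is a closed-form function of $\dist_G(x,v)$, $\dist_G(x,\parent(v))$ and depths, with error control immediate from the definition, at the cost of a slightly larger ambient interval and of storing depths (mod $m$) of the $O(\log n)$ special ancestors so the decoder can reconstruct $\tilde d(x,z)$ from the exact $\dist_G(x,z)$ --- which fits comfortably in the $O(\log n\cdot\log(nW))$ term. Both constructions yield alphabet size $2W+1-D$ and additive error $\ceil{D/(2W-D)}$, and your corner cases ($D=0$, $D=W=1$, $D=2W-1$) all match the paper's claims.
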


One instance of~\Cref{approxdelta} is $D=W$, which gives a 1-additive distance labeling scheme of size $\frac{1}{2}n \log (W+1) +o(n)$. For $D=2W-1$ we get a $(2W-1)$-additive distance labeling scheme of size $\frac{1}{2}n+o(n)$. For constant $r$ the above technique also applies to our constant time decoding results. For unweighted graphs this implies that we can have labels of size $\frac{1}{2}n+o(n)$ with a 1-additive error and constant decoding time. 

By combining the above two theorems, we obtain the theorem below; see \Cref{finalapprox}.

\begin{theorem} 
For any $k\geq 0$ and $D\leq 2(k+1)W-1$ there exists a $(2kW+\ceil{\frac{D}{2(k+1)W-D}})$-additive distance labeling scheme for graphs with $n$ nodes and edge weights in $[1,W]$ using labels of size $\frac{1}{2(k+1)}n \log (2(k+1)W+1-D) +O(\log n\cdot \log(nW))$ bits.
\end{theorem}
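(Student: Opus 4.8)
The plan is to obtain this scheme by \emph{composing} the two reductions of \Cref{secapprox}: first apply the node-sparsification behind the $(2kW)$-additive theorem, and then apply the weight-quantization of \Cref{approxdelta} on top of it, treating the sparsified instance as a weighted graph whose edge weights are bounded by $W':=(k+1)W$.

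First I would recall the structure produced in \Cref{nodeapprox} for the $(2kW)$-additive scheme. There one selects a set $S$ of at most $\tfrac{1}{k+1}n+o(n)$ representative nodes so that (a) every node $u$ has a representative $s(u)\in S$ with $\dist_G(u,s(u))\le kW$, and (b) along the contracted spanning structure the values $\delta_x(a,b)=\dist_G(x,b)-\dist_G(x,a)$ stored for consecutive representatives $a,b$ all lie in $[-(k+1)W,(k+1)W]$. The label of $x$ then carries $\dist_G(x,r)$ and $\dist_G(s(x),x)$ exactly ($O(\log(nW))$ bits), the representative bookkeeping, the heavy-path labels $\lab_T(x),\lab_T'(x)$ as in \Cref{seclong}, and --- via the same micro/macro-tree machinery of \Cref{seclong,secshort} --- enough of the $\delta_x(a,b)$'s to reconstruct $\dist_G(x,s(y))$ for every $y$; the decoder returns $\dist_G(x,s(y))+\dist_G(s(y),y)$, which is one-sided and off by at most $2kW$ by the triangle inequality, and whose dominant cost is $\approx\tfrac12\cdot\tfrac{n}{k+1}$ stored $\delta$-values over an alphabet of size $2(k+1)W+1$.

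Next I would feed exactly these stored $\delta_x(a,b)\in[-W',W']$ through the quantization of \Cref{approxdelta} (proved in \Cref{numberapprox}) with parameter $D$; the hypothesis $D\le 2(k+1)W-1=2W'-1$ is precisely what that lemma needs. That construction replaces each $\delta$-value by a rounded value taken from a symmetric block of $2W'+1-D=2(k+1)W+1-D$ consecutive integers, in such a way that the rounding error accumulated along any path of representatives stays in $\{0,1,\dots,\ceil{D/(2W'-D)}\}$ --- again one-sided. Hence the decoder now produces a value $\dist'_G(x,s(y))$ with $\dist_G(x,s(y))\le\dist'_G(x,s(y))\le\dist_G(x,s(y))+\ceil{\tfrac{D}{2(k+1)W-D}}$, while the bits spent on the $\delta$'s shrink to $\tfrac12\cdot\tfrac{n}{k+1}\log(2(k+1)W+1-D)=\tfrac{1}{2(k+1)}n\log(2(k+1)W+1-D)$; every other field is untouched and keeps contributing only $O(\log n\cdot\log(nW))$. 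Combining, the decoder outputs $\dist'_G(x,s(y))+\dist_G(s(y),y)$, which satisfies $\dist_G(x,y)\le\dist'_G(x,s(y))+\dist_G(s(y),y)$ since $\dist'_G(x,s(y))\ge\dist_G(x,s(y))$, and
\[
\dist'_G(x,s(y))+\dist_G(s(y),y)\ \le\ \Big(\dist_G(x,y)+kW+\ceil{\tfrac{D}{2(k+1)W-D}}\Big)+kW\ =\ \dist_G(x,y)+2kW+\ceil{\tfrac{D}{2(k+1)W-D}},
\]
so this is a $\big(2kW+\ceil{\tfrac{D}{2(k+1)W-D}}\big)$-additive scheme of the stated size. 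As sanity checks, $k=0$ recovers \Cref{approxdelta} and $D=0$ recovers the $(2kW)$-additive theorem.

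The main obstacle is making the interface between the two reductions airtight rather than the arithmetic. One has to verify that \Cref{approxdelta}'s lemma applies verbatim to the representative-level $\delta$-values --- they must genuinely have the symmetric range $[-W',W']$ and be combined by the same telescoping sums the lemma assumes --- and that the two error sources really add instead of interfering: quantization must touch only the inter-representative deltas and never the exactly stored fields $\dist_G(x,r)$ and $\dist_G(s(x),x)$, and both errors must stay one-sided so the output is always an over-estimate. A lesser point is checking that sparsifying to $n/(k+1)$ nodes and then shrinking the alphabet leaves the $O(\log n\cdot\log(nW))$ term intact, which is immediate because the per-node overhead (representative indices, two exact distances, heavy-path labels) is $O(\log n\cdot\log(nW))$ independently of $k$ and $D$.
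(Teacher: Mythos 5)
Your proposal is correct and matches the paper's argument: the paper likewise obtains this theorem by first applying the node-sparsification of \Cref{nodeapprox} to get a $(2kW)$-additive scheme whose stored $\delta$-values range over $[-(k+1)W,(k+1)W]$, and then applying the weight-quantization of \Cref{numberapprox} with parameter $D$ to that scheme, with the two one-sided errors adding. Your write-up is in fact more detailed than the paper's (which is a two-sentence composition argument), and the points you flag as needing verification are exactly the right ones.
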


\newpage
\bibliographystyle{plain}
\bibliography{LowDistLabels}

\makeatletter
\def\runninghead{\hrulefill\quad APPENDIX\quad\hrulefill}
\def\ps@headings{
\def\@oddhead{\footnotesize\rm\hfill\runninghead\hfill}}
\def\@evenhead{\@oddhead}
\def\@oddfoot{\rm\hfill\thepage\hfill}\def\@evenfoot{\@oddfoot}
\makeatother

\newpage
\setlength{\headsep}{15pt} \pagestyle{headings}

\appendix

\section{Lower bounds} \label{SecLower}

Our lower bound technique can be seen as a generalization of the
classical counting argument for adjacency labeling schemes. Indeed, for
$r=0$ and $W=1$, our formula yields $(n-1)/2$ bits, which is exactly the number of bits needed for
adjacency. 
The lower bound we develop here is well-suited for
small additive errors $r$. In particular, when $r < 2W$ we prove that
labels of $\Omega(n\log{(W/(r+1))})$ bits are required for an $r$-additive distance labeling scheme.

Given an unweighted graph $B$ and an integer $W \ge 1$, denote by $\sF_W(B)$ be
the family of all subgraphs of $B$ whose edges are weighted by values
taken from $\range{1}{W}$.

\begin{theorem}\label{th:lower}
  Let $B$ be an unweighted graph with $n$ vertices, $m$ edges and girth at least $g$,
  and let $r,W$ be integers such that $r \in [0,(g-2)W)$. Then, every
  $r$-additive approximate distance labeling scheme for $\sF_W(B)$
  requires a total label length of at least $m \log{(k+1)}$, and thus
  a label of at least $(m/n) \log{(k+1)}$ bits, where
  \[
  k ~=~ \floor{\frac{g-2}{g-1} \cdot \pare{\frac{W}{r+1} + 1}}~.
  \]
\end{theorem}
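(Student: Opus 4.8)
The plan is to use a counting argument: produce a large family of weighted graphs in $\sF_W(B)$ that must all receive pairwise-distinct label assignments, so that the total label length is bounded below by the logarithm of the size of that family. First I would fix the underlying unweighted graph $B$ and consider, for a carefully chosen integer $k$, the subfamily of $\sF_W(B)$ obtained by keeping \emph{all} $m$ edges of $B$ but assigning to each edge a weight from a fixed arithmetic progression $\{w_0, w_0+(r+1), w_0+2(r+1), \dots\}$ of $k+1$ admissible values inside $[1,W]$. The size of this subfamily is $(k+1)^m$. The key claim is that for two distinct such weightings $G \ne G'$, an $r$-additive scheme cannot assign the same collection of labels to the vertices in $G$ as in $G'$ — because the labels determine the reported distances up to additive $r$, and changing a single edge's weight by a multiple of $r+1$ changes some true shortest-path distance by enough that no common $r$-additive answer is consistent with both. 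Granting that claim, the map from weightings to label-tuples is injective, so some graph needs total label length at least $\log\!\big((k+1)^m\big) = m\log(k+1)$, hence a single label of at least $(m/n)\log(k+1)$ bits.

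The heart of the argument — and the main obstacle — is proving that claim, i.e.\ that distinct weightings are ``distinguishable'' by any $r$-additive scheme, and this is exactly where the girth hypothesis $g$ and the bound $r \in [0,(g-2)W)$ enter. The idea is: if $G$ and $G'$ differ on edge $e=\{u,v\}$, I want a pair of vertices whose $G$-distance and $G'$-distance differ by more than $2r$, so that no single returned value can be $r$-additive for both. Since the two weightings agree on all other edges, the natural candidate is the pair $u,v$ themselves, where the distance is either the weight of $e$ or the length of the shortest $u$–$v$ path avoiding $e$; that avoiding path has at least $g-1$ edges, so its length is at least $g-1$ (and this is where $\ge 1$ weights matter), while $w(e) \le W$. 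To make this clean one restricts attention to weightings in which every edge used is "short enough" relative to $g$; this forces the smallest-weight choice $w_0$ and the spacing $r+1$ to satisfy $w_0 + k(r+1) \le W$ together with the requirement that distances are realized by the edge itself rather than a detour, and balancing these constraints yields precisely
\[
k = \floor{\frac{g-2}{g-1}\cdot\pare{\frac{W}{r+1}+1}}.
\]
The factor $\frac{g-2}{g-1}$ is the price paid so that a detour of $\ge g-1$ unit-weight edges beats the heaviest edge weight by the needed margin of more than $2r$ — unpacking the inequality $w(e) + 2r < (g-1)\cdot 1$ against $w(e) \le w_0 + k(r+1) \le W$ is the routine but delicate calculation that pins down $k$.

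Concretely I would organize the proof as: (1) define the family $\sG \subseteq \sF_W(B)$ of weightings using the progression with $k+1$ values and verify all chosen weights lie in $[1,W]$; (2) show that for $G,G' \in \sG$ with $G\ne G'$, picking an edge $e$ on which they differ and looking at the endpoints of $e$, the two distances differ by at least $r+1$ in fact by more than $2r$, using girth to lower-bound any alternate path and the bound on $r$ to conclude the margin; (3) deduce that the decoder, which is a fixed function of the two labels, cannot return a value within $r$ of both true distances, so the label-assignment functions for $G$ and $G'$ differ on at least one vertex, giving injectivity of $G \mapsto (\ell(v,G))_{v}$ on $\sG$; (4) conclude $\sum_v |\ell(v,G)| \ge \log|\sG| = m\log(k+1)$ for some $G$, hence the max label length is at least $(m/n)\log(k+1)$. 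I expect step (2), and in particular getting the exact floor expression for $k$ out of the girth/weight/error inequalities, to be where essentially all the work lies; the rest is the standard labeling-scheme counting bound.
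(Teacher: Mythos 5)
Your overall strategy---exhibit $(k+1)^m$ weightings of $B$ whose $r$-approximate distance matrices are pairwise distinct, then count---is exactly the paper's, and steps (3)--(4) are fine. But your construction of the family does not deliver the stated value of $k$, and this is not a technicality: the theorem as stated is unreachable by your family. You keep all $m$ edges and use $k+1$ weights $w_0, w_0+(r+1),\dots,w_0+k(r+1)$, all required to lie in $[1,W]$; this forces $k(r+1)\le W-1$, whereas the theorem's $k=\floor{\frac{g-2}{g-1}\pare{\frac{W}{r+1}+1}}$ can violate that. For instance $W=1$, $r=0$, $g=3$ gives $k=1$, but $[1,1]$ has no room for two weights; $W=2$, $r=1$, $g=3$ gives $k=1$, but $\{w_0,w_0+2\}\subseteq[1,2]$ is impossible. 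The paper's fix is that the $(k+1)$-st ``color'' is not a weight but \emph{deletion of the edge}: only $k$ weights $W_0,\dots,W_{k-1}$ (with $W_{k-1}=W$) need to fit in $[1,W]$, and the girth guarantees that a deleted edge forces $\dist(x,y)\ge(g-1)W_0\ge W+r+1$, so deletion behaves like a virtual $(k+1)$-st weight level strictly above $W$. That one extra free level is precisely the ``$+1$'' inside the parentheses of $k$; without it your feasibility constraints (detour $\ge$ heaviest weight, all weights $\le W$) pin you to $k\le\frac{g-2}{g-1}\cdot\frac{W}{r+1}$, a strictly weaker bound that degenerates to nothing in the unweighted exact case.

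Two further quantitative slips in your step (2). First, the paper's approximation is one-sided ($\dist\le M\le\dist+r$), so a gap of $r+1$ between the two true distances already forces $M[x,y]<M'[x,y]$; you do not need, and with spacing $r+1$ you cannot get, a gap of ``more than $2r$'' between adjacent weight levels (for $r\ge1$, $r+1\le 2r$). If you insist on the two-sided reading, your progression must be spaced $2r+1$ apart and the formula for $k$ changes. Second, the detour avoiding a modified edge has length at least $(g-1)W_0$ where $W_0$ is the \emph{minimum weight actually used} (which the paper deliberately makes large, $W_0=W-(k-1)(r+1)$), not $(g-1)\cdot 1$; with unit-weight detours the inequality ``detour beats heaviest edge'' fails already for $g=3$ and $W\ge 3$. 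Once you (a) replace the top weight by deletion, (b) use the one-sided $r+1$ separation, and (c) bound detours by $(g-1)W_0$, the verification that $W_0\ge1$, $W_{k-1}=W$ and $(g-1)W_0\ge W+r+1$ is the routine calculation that produces the stated $k$.
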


\begin{proof}
  An $r$-approximate distance matrix for a weighted graph $G$ with
  vertex-set $\range{1}{n}$ is an $n \times n$ matrix $M$ such that
  $\dist_G(x,y) \le M[x,y] \le \dist_G(x,y) + r$ for all vertices
  $x,y$ of $G$. 

  The basic idea of our lower bound technique is to show that $\sF_W(B)$
  contains a large set $\sG$ of weighted graphs for which no two
  graphs can have the same $r$-approximate distance matrix. A crude
  observation is that an $r$-approximate distance matrix for each
  graph of $\sG$ can be generated from the ordered list of all the
  labels provided by any $r$-additive approximate labeling scheme for
  $\sG$. So, it turns out that, by a simple counting argument, the
  total label length must be at least $\log{|\sG|}$. In particular,
  the labeling scheme must assign, for some vertex of some graph of
  $\sG$, a label of at least $(\log{|\sG|})/n$ bits. We now construct
  such a set $\sG$ with $|\sG| = (k+1)^m$.

  For the shake of the presentation, define $W_i = W -
  (k-i-1)(r+1)$ for $i=0,\dots ,k$. Note that the $W_i$s increase 
  with $i$, and more precisely that $W_{i+1} = W_i + r+1$. Moreover,
  we observe that:

  \begin{claim}\label{claim:W}
    The following hold: $k\ge 1$, $W_0, \dots, W_{k-1} \in [1,W]$, and $W_k \le (g-1)W_0$.
  \end{claim}

  Before we give a formal proof of Claim~\ref{claim:W} (which is a
  basic calculation), we explain how to derive our lower bound.

  Consider the set $\sC$ of all edge-colorings of $B$ into $k+1$
  colors. More precisely, an edge-coloring $c \in\sC$ is simply a
  function $c: E(B) \to \range{0}{k}$ mapping to each edge $e$ of $B$
  some integer $c(e) \in\range{0}{k}$. Clearly, $|\sC| = (k+1)^m$
  since each of the $m$ edges of $B$ can receive $k+1$ distinct
  values.

  With each coloring $c \in \sC$, we associate a weighted graph $G$
  with edge-weight function $w$ obtained from graph $B$ by testing the
  color of each edge $xy$ of $B$. If $c(xy) = k$, the edge is deleted.
  And, if $c(xy) = i <k$, we keep $xy$ in the graph and set $w(xy) =
  W_i$. We denote by $\sG$ the family of graphs constructed by this
  process from all the colorings of $\sC$. It is clear that, given
  $B$, one can recover from $G$ and $w$ the initial coloring $c$ (just
  scan all the possible edges of $B$, check if they exist in $G$
  and look at their weights). In other words the construction is
  bijective and thus $|\sG| = |\sC| = (k+1)^m$.

  By construction, each graph of $\sG$ is a subgraph of $B$. Moreover,
  by Claim~\ref{claim:W}, each weight is some integer $W_i \in [1,W]$
  as $i \in\range{0}{k-1}$ (edges of color $k$ have been removed). In
  other words, $\sG \subseteq \sF_W(B)$. It remains to prove that any
  two graphs of $\sG$ cannot have the same $r$-approximate distance
  matrix. The intuition is that two graphs of $\sG$ differ only when 
  there is an edge $xy$ in $B$ whose color is different in the two 
  graphs. Because of the choice of the edge-weights, the distance
  between $x$ and $y$ in the two graphs must, as we shall see, differ by at least $r+1$.

  Let $G,G'$ be two distinct weighted graphs of $\sG$. Denote by
  $w,w'$ their respective edge-weighting functions, and by $M,M'$ any
  $r$-approximate distance matrices for $G$ and $G'$ respectively. As
  we will show, if $G$ and $G'$ are different, there must exist an
  edge $xy$ of $B$, and two colors $i,j \in\range{0}{k}$, $i<j$, such
  that $\dist_G(x,y) \le W_i$ and $\dist_{G'}(x,y) \ge W_j$ (the case
  $\dist_{G'}(x,y) \le W_i$ and $\dist_{G}(x,y) \ge W_j$ is symmetric).
  For this purpose, we consider two cases:

  (i) The graphs $G,G'$ are different because there is an edge $xy$ of $B$
  which  is in $G$ but not in $G'$. We have $xy \in E(G)$, which implies that
  $\dist_{G}(x,y) \le w(xy) \le W_{k-1}$, since the color of $xy$ in
  $G$ is $< k$. Further, $xy \notin E(G')$ implies that $\dist_{G'}(x,y) \ge
  (g-1) W_0$, since any path from $x$ to $y$ in $G'$ contains at least
  $g-1$ edges ($G'$ is a subgraph of $B$ which has girth at least $g$),
  and the minimum weight assigned to any edge is $W_0$. 
(Note that this holds, in particular, when $x$ and $y$ are unconnected and $\dist_G(x,y)=\infty$.)
Thus from
  Claim~\ref{claim:W}, $\dist_{G'}(x,y) \ge W_k$.
 So the claim holds for $i=k-1$ and $j=k$.

  (ii) The graphs $G,G'$ are different because there is an edge $xy$ in $G$
  and $G'$ with different weights. Assuming that $w(xy) < w'(xy)$, there must exist 
 $i,j$ such that $w(xy) = W_i$ and $w'(xy) = W_j$. Note that $i <
  j < k$. We have $\dist_{G}(x,y) = W_i$ and $\dist_{G'}(x,y) = W_j$
  since we have seen in the previous case that every path from $x$ to
  $y$ and excluding the edge $xy$ has cost at least $(g-1)W_0$. By
  Claim~\ref{claim:W}, $W_i < W_j < W_k \le (g-1)W_0$.

  In both cases we have found $i,j \in \range{0}{k}$, $i<j$, such that
  $\dist_G(x,y) \le W_i$ and $\dist_{G'}(x,y) \ge W_j$. Now, by
  definition of $M$ and $M'$, $M[x,y] \le \dist_G(x,y) +r \le W_i + r$
  and $W_j \le \dist_{G'}(x,y) \le M'[x,y]$. Since $W_j \ge W_{i+1} =
  W_i + r+1$, we conclude that $M[x,y] < M'[x,y]$ proving that no
  two different graphs of $\sG$ can have the same $r$-approximate
  distance matrix.

  To complete the proof, it remains to prove Claim~\ref{claim:W}.  Let
  us first show that $k\ge 1$ (which is required since in the proof we
  use for instance that $W_0 \le W_{k-1}$). Recall that $r \le (g-2)W
  -1$,

  \[
  k ~= \floor{ \frac{g-2}{g-1} \pare{ \frac{W}{r+1} + 1}} ~\ge~
  \floor{ \frac{g-2}{g-1} \pare{ \frac{W}{(g-2)W} + 1} } ~=~ \floor{
    \frac{g-2}{g-1} \pare{ \frac{1}{g-2} + 1}} ~=~ 1.
  \]

  Let us show that $W_0 \ge 1$. Since $W_0 = W - (k-1)(r+1)$, it
  suffices to check that $(k-1)(r+1) < W$. We have,

  \[
  (k-1)(r+1) ~=~ \pare{ \floor{ \frac{g-2}{g-1} \pare{ \frac{W}{r+1} +
        1}} - 1} \cdot \pare{r+1} ~\le~ \frac{g-2}{g-1} \cdot \frac{W}{r+1}
  \cdot (r+1) ~<~ W
  \]
  since the girth $g$ is always at least three.

  Now we have $W_0,\dots,W_{k-1} \in [1,W]$, since the $W_i$'s are
  non-decreasing, $W_0 \ge 1$, and $W_{k-1} = W - (k-(k-1)-1)(r+1) = W$.

  Let us show that $W_k \le (g-1)W_0$. We have $W_k = W + r+1$ and
  $W_0 = W - (k-1)(r+1)$. Therefore,
  \begin{eqnarray*}
    W_k \le (g-1)W_0 &\Leftrightarrow& W + r+1 \le (g-1)(W-(k-1)(r+1)) \\
    &\Leftrightarrow& r+1 + (g-1)(k-1)(r+1) \le (g-2)W \\
    &\Leftrightarrow& (g-1)(k-1) \le (g-2) \frac{W}{r+1} - 1\\
    &\Leftrightarrow& k \le \frac{g-2}{g-1} \cdot \frac{W}{r+1} -
    \frac{1}{g-1} + 1 ~=~ \frac{g-2}{g-1} \cdot\pare{\frac{W}{r+1} +1}~.
  \end{eqnarray*}
  The latter equation is true by the choice of $k$. This completes the
  proof of Claim~\ref{claim:W} and of Theorem~\ref{th:lower}.
\end{proof}

\begin{table}[htb] 
  \begin{center}
        \renewcommand{\arraystretch}{1.75}
        \begin{tabular}{|c||c|c|c|c|}
          \hline
          Graphs & $r=0, W\ge 1$ & $r=1, W\ge 2$ & $r=0, W=1$ & $r=(g-2)W-1$\\
          \hline\hline
          General &
          $\frac{1}{2}(n-1)\log{\ceil{\frac{W}{2} + 1}}$ &
          $\frac{1}{2}(n-1)\log{\floor{\frac{W}{4} + \frac{3}{2}}}$ &
          \multicolumn{2}{c|}{$\frac{1}{2}(n-1)$} \\
          \hline
          Bipartite &
          $\frac{1}{4}n\log{\floor{\frac{2W}{3} + \frac{5}{3}}}$ &
          $\frac{1}{4}n\log{\floor{\frac{W}{3} + \frac{5}{3}}}$ &
          \multicolumn{2}{c|}{$\frac{1}{4}n$} \\
          \hline
        \end{tabular}  
        \caption{
	  Lower bounds derived from Theorem~\ref{th:lower}. For
          ``general graphs'' we use the family $\sF_W(K_n)$, where
          $K_n$ denotes the complete graph on $n$ vertices, so $m =
          n(n-1)/2$ and $g=3$. For ``Bipartite graphs'' we use the
          family $\sF_W(K_{n/2,n/2})$, where $K_{n/2,n/2}$ denotes the
          complete bipartite graph on $n$ vertices (assuming $n$ even)
          so $m = n^2/4$ and $g=4$. Note that the case $r=W=1$ and the
          case $r=2W-1$ is captured by the last column of the last
          line, and so the lower bound is $n/4$.}
    \label{tab:lower}
  \end{center}
\end{table}

A collection of corollaries to \Cref{th:lower} can be seen in \Cref{tab:lower}.

The case $r\ge 2$ and $W=1$ is out of the range of our lower bound, as
long as we choose for $B$ a graph with $m = \Theta(n^2)$ edges. Our lower
bound still applies for $r=2,3$ and $W=1$, but using girth-$6$ graphs
$B$ that are known to exists with $m = \Theta(n^{3/2})$ edges. 
There are several constructions, based on finite projective geometries, of graphs with $\Omega(n^{3/2})$ edges and girth at least 6 (see for instance~\cite{Wenger91}). So,
Theorem~\ref{th:lower} can also prove the $\Omega(\sqrt{n}\,)$ lower
bound for $r=2,3$ and $W=1$. The case of larger $r$ can be captured by
the more general lower bound of~\cite{GKKPP01}, that uses a
subdivision technique, and shows that $\Omega(\sqrt{n/(r+1)}\,)$ bit
labels are required for any $r \ge 2$.

\section{Constructing micro trees}
\begin{lemma}\label{lempart}
  Let $k$ be a positive integer. Every $m$-edge tree has an
  edge partition into at most $\ceil{m/k}$ trees of at most $2k$
  edges.
\end{lemma}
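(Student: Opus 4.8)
The plan is to proceed by a greedy bottom-up peeling argument on the tree, repeatedly carving off subtrees of controlled size. First I would root the $m$-edge tree $T$ arbitrarily, and process nodes in an order of non-increasing depth (i.e.\ leaves first). Maintain for each node $v$ a count $s(v)$ of the number of edges currently ``hanging below'' $v$ that have not yet been assigned to a finished part. When processing $v$, set $s(v) = 1 + \sum_{w} s(w)$ summed over the children $w$ of $v$ that are still attached (the $+1$ accounts for the edge from $v$ to its parent, or is omitted if $v$ is the root). Whenever $s(v)$ would exceed $k$, I would cut $v$ off from its parent: the edges accumulated in the subtree rooted at $v$ together with the edge to $v$'s parent form one new part, and we reset so that this block of edges is removed from further consideration; since each child contributed at most $k$ edges and we had not yet triggered a cut, the accumulated block has at most $2k$ edges, which is where the factor $2$ comes from.

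The key steps, in order, are: (1) fix the root and the leaves-first processing order; (2) maintain the running edge-counts $s(v)$ and specify precisely when a cut is forced (as soon as the accumulated count reaches or exceeds $k$, before it can grow past $2k$); (3) check the size bound $\le 2k$ for each part using the invariant that every still-attached child of the cut node contributes $< k$ edges, so the total is $< 2k$ plus the one parent edge, i.e.\ $\le 2k$; (4) bound the number of parts: each completed part (except possibly the last remaining one containing the root) carries at least $k$ edges — more carefully, I would argue that we can always merge so that every part has at least $k$ edges except one, or alternatively just observe that since the $m$ edges are partitioned and all but possibly one part has size $\ge k$, there are at most $\ceil{m/k}$ parts; (5) handle the leftover fragment at the root by absorbing it into an adjacent part if it is too small, which keeps that part within $2k$ edges.

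The main obstacle I expect is step (4), the counting of the number of parts as $\le \ceil{m/k}$: the naive peeling produces parts each of size somewhere in $[1,2k]$, and a part of size $1$ would ruin the count. The fix is to be careful about \emph{when} to cut — only cut once the accumulated count is at least $k$ — so that every part we output has size in $[k,2k]$, except possibly the single final part hanging at the root, which we then merge into a neighbouring part (the merged part has size $\le 2k + k \le \ldots$, so one must instead merge a \emph{small} root fragment of size $< k$ into a neighbor, giving size $< 2k$, or if the root fragment already has size $\ge k$ it stands alone). With all output parts of size $\ge k$ except at most one small one absorbed elsewhere, the number of parts is at most $\ceil{m/k}$. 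The size bound $\le 2k$ and the edge-partition property are then routine from the construction. I would also remark that rooting the tree is harmless since the parent edge of the root simply does not exist, and that the parent relationship inside each part agrees with that of $T$, which is the property actually used in \Cref{secshort}.
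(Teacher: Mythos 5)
Your overall strategy---greedily peeling off edge-chunks of size between $k$ and $2k$ and then counting---is the same as the paper's (the paper peels top-down, you peel bottom-up), and your counting in step (4) is sound: once every part except possibly one has at least $k$ edges, $tk+1\le m$ forces $t+1\le\ceil{m/k}$, so the leftover root fragment can simply stand as its own part and \emph{no merging is needed}. (Your step (5) as written is actually harmful: absorbing a fragment of size up to $k-1$ into a neighbouring part of size up to $2k$ can produce a part with almost $3k$ edges, violating the bound.)

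The genuine gap is in the cutting rule itself. You compute $s(v)=1+\sum_w s(w)$ over all still-attached children at once and, when this exceeds $k$, cut off ``the edges accumulated in the subtree rooted at $v$ together with the edge to $v$'s parent'' as one part. For a high-degree node this part is not bounded by $2k$: take a star with $3k$ leaves. Every leaf has $s=1\le k$, so no cut is triggered below the centre, and at the centre the count jumps straight to $3k$; your rule then outputs a single part with $3k>2k$ edges. Your justification (``before the last addition the total was $<k$, and the last addition adds at most $k$'') is the right inequality, but it only applies if you add the children of $v$ \emph{one at a time} and are allowed to close off a part consisting of a \emph{proper subset} of $v$'s child subtrees (sharing $v$ as their root, and \emph{not} taking $v$'s parent edge, which must stay with the residue so that the remainder is connected and $s(v)\le k$ is restored). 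That is the missing idea: a single node $v$ may be the root of several parts of the edge partition. With that modification each part has between $k$ and $2k-1$ edges, the invariant $s(v)\le k$ propagates upward, and the lemma follows; without it, the construction as literally stated fails.
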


\begin{proof}
 Consider a tree $T$ with $m$ edges. If $T$ has fewer than $k$ edges,
 then the partition is $T$ itself and we are done.

 Otherwise, we will construct a subtree $A$ of $T$ with at least $k$
 and at most $2k$ edges such that $T-A$ is still
 connected. (By $T-A$ we mean the forest induced by all the
 edges in $E(T)- E(A)$.) Once such an $A$ is constructed, we
 can repeat the process on the remaining tree $T-A$ until
 having a tree with less than $k$ edges. Since each subtree $A$ has
 at least $k$ edges, the process stop after we have constructed at most
 $\ceil{m/k}$ trees.
\end{proof}

\section{Approximate distances}  \label{approx}

\subsection{Approximation using fewer nodes}\label{nodeapprox}

\begin{lemma} \label{kapprox}
Given a graph $G$ with $n$ nodes, edge weights in $[1,W]$ and a rooted spanning tree $T$, we can, for integers $k\geq 0$, construct a tree $T(k)$ whose node set is a subset of $T$ and with the following properties.
\begin{itemize}
\item $|T(k)| \leq 1+ \frac{n}{k+1}$.
\item For any node $v \in T(k)^*$,  $\dist_G(v,parent_{T(k)}(v)) \leq (k+1)W$.
\item For any node $w \in G$, there exists a node $v \in T(k)$ with $\dist_G(v,w) \leq kW$.
\end{itemize}
\end{lemma}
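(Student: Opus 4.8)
\textbf{Proof proposal for Lemma~\ref{kapprox}.}

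The plan is to build $T(k)$ by a greedy bottom-up contraction of $T$ that repeatedly selects nodes at depth roughly a multiple of $k+1$ away from already-selected nodes. Concretely, I would process the nodes of $T$ in order of decreasing depth, maintaining for each node $u$ a value $\rootdist(u)$ equal to the number of edges on the path in $T$ from $u$ up to the nearest already-selected descendant-side ancestor substitute (i.e. the nearest node that will become its parent in $T(k)$), and select $u$ into $T(k)$ whenever this running count would otherwise exceed $k$. The root $r$ of $T$ is always selected. This is the same style of argument as the heavy-path/micro-tree partitioning used earlier in the paper, but here we partition by \emph{depth} rather than by subtree size.

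The key steps, in order, are as follows. First, define the selection rule precisely: traverse $T$ from the leaves upward; a node $v$ is \emph{marked} if $v=r$, or if among the children subtrees some accumulated unmarked path from a descendant has length exactly $k$ (so that marking $v$ keeps every contracted edge of length at most $k+1$), or more simply: mark $v$ iff the distance in $T$ from $v$ to its nearest marked strict descendant in each branch reaches $k+1$. Second, set the node set of $T(k)$ to be the marked nodes, with $\parent_{T(k)}(v)$ defined as the nearest marked proper ancestor of $v$ in $T$; this is a well-defined rooted tree since $r$ is marked. Third, verify the three bullet points. For the second bullet, by the marking rule every marked node is at most $k+1$ edges in $T$ above its parent in $T(k)$, and since each $T$-edge has weight at most $W$, $\dist_G(v,\parent_{T(k)}(v))\le \dist_T(v,\parent_{T(k)}(v))\le (k+1)W$. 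For the third bullet, any node $w$ is within $k$ edges in $T$ of its nearest marked ancestor (otherwise that ancestor chain would have forced a mark), so $\dist_G(v,w)\le \dist_T(v,w)\le kW$ for that marked $v$. For the first bullet (the size bound), I would use a charging argument: assign to each marked node $v\neq r$ a set of at least $k+1$ distinct nodes of $T$ — namely $v$ together with the $k$ (or more) nodes on the unmarked stretch strictly below $v$ up to the next mark on one chosen branch — and argue these sets are pairwise disjoint across marked nodes, so $(k+1)(|T(k)|-1)\le n$, giving $|T(k)|\le 1+\frac{n}{k+1}$.

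The main obstacle will be making the size bound tight, i.e. getting exactly $1+\frac{n}{k+1}$ rather than something like $1+\frac{n}{k}$. The subtlety is that at a branching node several child-branches may each contribute a nearly-full unmarked stretch, and a naive charging scheme would double-count or undercount. The fix is to be careful that the $k+1$ witnesses charged to a marked node $v$ lie strictly \emph{between} $v$ and the marks directly below it on a single, consistently chosen branch (say the first child in some fixed ordering whose branch triggered the mark), and that the witness set for $v$ never includes $v$'s own parent-in-$T(k)$ nor any node charged to a different mark; a short induction on the tree structure, or equivalently the observation that the marked nodes induce a partition of $V(T)\setminus\{r\}$ into ``clusters'' each consisting of a marked node plus the unmarked nodes hanging below it before the next marks, each cluster of size at least $k+1$, closes the gap. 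Everything else is routine: the traversal is linear-time and the distance bounds follow immediately from $T$ being a subtree of $G$ with edge weights at most $W$.
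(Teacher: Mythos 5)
Your construction is correct, but it is genuinely different from the paper's. The paper simply partitions the nodes of $T$ into $k+1$ classes by depth modulo $k+1$, takes a smallest class (of size at most $\floor{n/(k+1)}$) together with the root, and connects two selected nodes when no other selected node lies between them in $T$; all three properties then fall out in one line each, since along any root-to-leaf path consecutive selected depths are at most $k+1$ apart and every node has a selected ancestor within $k$ edges. You instead build the set by a greedy bottom-up contraction, marking a node as soon as an unmarked descending chain through it reaches $k+1$ nodes. Your distance bounds are fine (if more than $k$ unmarked nodes sat strictly between a marked node and its nearest marked ancestor, the lowest $k+1$ of them would have forced an earlier mark), and your cluster/charging argument for the size bound does close: the marked nodes partition $V(T)$ into clusters, each non-root cluster containing its mark plus the $k$ unmarked nodes of a triggering chain, and these witness sets are disjoint because two marks cannot both see the same unmarked node without a mark intervening. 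What you pay for this route is exactly the bookkeeping you flag as the ``main obstacle'': the size bound needs the disjointness argument, and your statement of the marking rule should be tightened to a maximum over branches of unmarked chain lengths rather than ``in each branch''. What you gain is a construction that adapts to non-uniform situations (it clusters by actual chain structure rather than by global depth), whereas the paper's pigeonhole is shorter and gives the cardinality bound for free.
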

\begin{proof}
Partition the nodes in $T$ into $k+1$ equivalence classes according to their depth in $T$ modulo $k+1$. One of these equivalence classes must contain $\lfloor n/(k+1)\rfloor$ or fewer nodes. Select such a subset of nodes and denote it $T(k)$. Also include the root of $T$ in $T(k)$, giving that $|T(k)|\leq 1+n/(k+1)$. In $T(k)$ construct an edge between two nodes iff no other nodes from $T(k)$ are on the simple path between the nodes in $T$. Then, for any $v\in T(k)^*$, $\dist_G(v,\parent_{T(k)}(v)) \leq (k+1)W$ since the number of edges between $v$ and $\parent_{T(k)}(v)$ in $T$ is at most $k+1$. Similarly, for any $w \in T$, its nearest ancestor $v$, which also in $T(k)$ (could be $w$ itself) is at most $k$ edges up in $T$, giving $\dist_G(v,w) \leq kW$
\end{proof}

Roughly speaking, the approximation scheme presented here applies the previous techniques to create an exact distance labeling scheme for $T(k)$, which has fewer nodes but larger weights between adjacent nodes. For a node $x$ that is not in $T(k)$, we find its nearest ancestor $x'$ in $T(k)$ and give $x$ the same label as $x'$. We then approximate $\dist_G(x,y)$ by $2kW+\dist_G(x',y')$. This will at most give us an error in  $[0,4kW]$, meaning that we now have a $(4kW)$-additive labeling scheme with labels of size $\frac{n}{2(k+1)}\log (2(k+1)W+1)$, ignoring second order terms.  It is possible to optimize this approach and obtain a $(2kW)$-additive scheme, by only using approximate distance for either $x$ or $y$ to their nearest ancestors $x'$ or $y'$. 
Here, we show how to do it for the heavy path approach. A similar result holds for the micro tree approach.

As described in \Cref{seclong}, the label $\lab(x)$ includes the sublabels $\lab_T(x)$ and $\lab'_T(x)$ using $O(\log n\cdot \log(nW))$ bits. Our new label for approximate distance will include those sublabels as well. For a node $v$ in $T$ let $v'$ be its nearest ancestor in $T(k)$. In $x$'s label we also save $\lab_T(x')$ and $\lab'_T(x')$. In addition, we include a label $\lab_T''(x)$ containing the distances from $x$ to $w'$ for all $w$ that appear in $\lab(x)$; this label also uses $O(\log n\cdot \log(nW))$ bits.

Now. if $x$ or $x'$ is an ancestor to $y$ in $T$, we compute $\dist_G(x,y)$ as $kW$ plus the distance in $T$ from $x$ or $x'$ to $y$. This will at most give an additive error of $2kW$. Similarly for $y$ and $y'$ and $x$. Those computation can be done as explained in \Cref{seclong} using the labels defined so far. If we cannot compute the distance in this way, then, consider the nearest common ancestor $z$ of $x$ and $y$ in $T$. The node $z'$ must then be the nearest common ancestor of $x'$ and $y'$ in $T(k)$.   
Let $n_k=|T(k)|$. We will save  $\floor{n_k/2}$ of the values $\delta_x(\parent_{T(k)}(v),v)$ for all $v$ from $T(k)$ with $\dfs(x') < \dfs(v)\leq \dfs(x')+\floor{n_k/2} \pmod {n_k}$. As $\delta_x(\parent_{T(k)}(v),v) \in [-(k+1)W,(k+1)W]$ we can encode all these $\delta$-values using $\ceil{\frac{1}{2}n_k \log f((k+1)W)}$ bits. We can now compute $\dist_G(x,y) = kW+\dist_G(x,z') + \sum_{v\in T(k)(z',y']} \delta_x(\parent_{T(k)}(v),v)$, where $\dist_G(x,z')$ can be computed from $\lab_T''(x)$. This will at most give an additive error of $2kW$. We have now proved the following theorem.

\begin{theorem} \label{theo:esbenapprox}
There exists a $(2kW)$-additive distance labeling scheme for graphs with $n$ nodes and edge weights in $[1,W]$ with labels of size $\frac{1}{2(k+1)}n\log f((k+1)W) +O(\log n\cdot \log(nW))$.
\end{theorem}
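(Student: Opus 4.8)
The plan is to bootstrap from \Cref{kapprox} and the exact scheme of \Cref{theo:esben}. First I would apply \Cref{kapprox} to $G$ together with a shortest-path tree $T$ of $G$, obtaining the contracted tree $T(k)$ on at most $n_k := 1 + n/(k+1)$ nodes in which every $T(k)$-parent edge corresponds to a $T$-path of at most $k+1$ edges, hence of weight at most $(k+1)W$, and such that every node of $G$ lies within $G$-distance $kW$ of some node of $T(k)$. The point is that the heavy-light/depth-first construction of \Cref{theo:esben} applies verbatim to $T(k)$: for each $x$ I would store the tree sublabels $\lab_T(x)$ and $\lab_T'(x)$ together with a cyclic window of $\floor{n_k/2}$ values $\delta_x(\parent_{T(k)}(v),v)$, each lying in $[-(k+1)W,(k+1)W]$. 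By \Cref{deltaer} the window costs $\ceil{\tfrac12 n_k\log f((k+1)W)} = \tfrac1{2(k+1)}n\log f((k+1)W)+O(\log n)$ bits, and the tree sublabels cost $O(\log n\cdot\log(nW))$.

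Next I would handle nodes outside $T(k)$. For $x\notin T(k)$ write $x'$ for its nearest ancestor in $T(k)$ (so $\dist_G(x,x')\le kW$ by \Cref{kapprox}); I additionally store in $x$'s label the sublabels $\lab_T(x')$, $\lab_T'(x')$ and a sublabel $\lab_T''(x)$ recording $\dist_G(x,w')$ for every node $w$ occurring in $\lab_T(x)$, all of which costs a further $O(\log n\cdot\log(nW))$ bits, giving the stated label size. To decode $\dist_G(x,y)$ I distinguish three cases. (a) If $x$ or $x'$ is an ancestor of $y$ in $T$, return $kW$ plus the exact $T$-distance from $x$, resp.\ $x'$, to $y$, read off from the $\lab_T,\lab_T'$ data exactly as in \Cref{seclong} (recall that in a shortest-path tree $\dist_T$ agrees with $\dist_G$ on ancestor pairs). (b) Symmetrically if $y$ or $y'$ is an ancestor of $x$. (c) Otherwise let $z$ be the nearest common ancestor of $x$ and $y$ in $T$, note that its nearest $T(k)$-ancestor $z'$ is precisely the nearest common ancestor of $x'$ and $y'$ in $T(k)$, and return
\[
kW + \dist_G(x,z') + \sum_{v\in T(k)(z',y']}\delta_x(\parent_{T(k)}(v),v),
\]
where the $\delta_x$-values along $T(k)(z',y']$ are in $x$'s cyclic window (swapping the roles of $x$ and $y$ otherwise, as in the proof of \Cref{theo:esben}), and $\dist_G(x,z')$ is read from $\lab_T''(x)$ or, when $z$ occurs only in $y$'s tree label, reconstructed via $\dist_T(x,z') = \dist_T(x,r)-\dist_T(r,y')+\dist_T(z',y')$, exactly as in the last paragraph of \Cref{seclong}.

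For correctness I would verify the $(2kW)$-additive guarantee case by case. The telescoping identity for $\delta_x$ gives $\dist_G(x,z')+\sum_{v\in T(k)(z',y']}\delta_x(\parent_{T(k)}(v),v) = \dist_G(x,y')$, and since $\dist_G(y,y')\le kW$ the triangle inequality yields $\dist_G(x,y)\le kW+\dist_G(x,y')\le \dist_G(x,y)+2kW$; cases (a) and (b) are the same estimate with $y'$, resp.\ $x'$, replaced by $y$, resp.\ $x$, so they lie in the same error window. The main obstacle I expect is bookkeeping rather than mathematical depth: one must check that $z'=\operatorname{NCA}_{T(k)}(x',y')$ so the $T(k)$-path data actually reaches $z'$, that the cyclic-window argument of \Cref{theo:esben} still forces one of the two labels to contain the needed $\delta_x$-values over $T(k)$, and that $\dist_G(x,z')$ remains recoverable in the asymmetric sub-case — these are exactly the points where a careless argument would fail, but each is inherited directly from the corresponding step in \Cref{seclong}.
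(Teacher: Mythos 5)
Your proposal is correct and follows essentially the same route as the paper: apply \Cref{kapprox} to get $T(k)$, reuse the heavy-path labels $\lab_T,\lab_T'$ of \Cref{seclong} augmented with $\lab_T(x'),\lab_T'(x')$ and $\lab_T''(x)$, store a cyclic half-window of $\delta_x$-values over $T(k)$ costing $\ceil{\tfrac12 n_k\log f((k+1)W)}$ bits, and decode via the ancestor cases plus the telescoping sum to $\dist_G(x,y')$ with a one-sided $kW$ shift. The error analysis and the identification of $z'$ as the $T(k)$-NCA of $x'$ and $y'$ are exactly the paper's argument, with the details you flag filled in correctly.
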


\subsection{Approximation using fewer weights}\label{numberapprox}
With edge weights in $[1,W]$ we have been encoding $\frac{1}{2}n$ numbers from the interval $I=[-W,W]$ of size $|I|=2W+1$ using $\frac{1}{2}n\log (2W+1)$ bits. The theorem below uses an approximation technique where we use integers from a smaller set $I' \subseteq I$, which will reduce the space consumption but introduce an error when computing $\delta_x$-values. As we shall see, the error can be capped even when we are summing many $\delta_x$-values.

\begin{theorem} 
For any $D\leq 2W-1$ there exists a $\ceil{\frac{D}{2W-D}}$-additive distance labeling scheme for graphs with edge weights in $[1,W]$ using labels of size $\frac{1}{2}n \log (2W+1-D) +O(\log n\cdot \log(nW))$ bits.
\end{theorem}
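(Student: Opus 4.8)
\textbf{Proof proposal for Theorem~\ref{approxdelta} ($D$-rounding approximation).}

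The plan is to reuse the exact scheme of \Cref{theo:esben} (the heavy-path construction of \Cref{seclong}) verbatim, but replace each stored $\delta_x$-value by a \emph{rounded} version drawn from a smaller alphabet, and then to bound the accumulated rounding error. Recall that in the exact scheme the label of $x$ stores, for roughly half of the nodes $v$, the integer $\delta_x(\parent(v),v) \in [-W,W]$, and that $\dist_G(x,y) = \dist_G(x,z) + \sum_{v\in T(z,y]}\delta_x(\parent(v),v)$ for the relevant NCA $z$, where $\dist_G(x,z)$ is recovered exactly from $\lab'_T(x)$ or $\lab'_T(y)$. My first step is to fix a rounding map $\rho \colon [-W,W] \to I'$ onto a set $I'$ of $2W+1-D$ integers such that $|\rho(a) - a| \le \tfrac{D}{2}$ for every $a$; concretely one removes $D$ values by merging neighbouring integers (round each $a$ to the nearest element of a sub-lattice whose gaps have size at most $\lceil (2W)/(2W-D)\rceil$ — the exact construction of $I'$ is a routine but slightly fiddly number-theoretic choice and is the one place worth doing carefully). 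The label of $x$ then stores $\rho(\delta_x(\parent(v),v))$ instead of the true value; by \Cref{deltaer} this costs $\tfrac12 n\log(2W+1-D) + O(\log n\cdot\log(nW))$ bits, the extra $O(\log n\cdot\log(nW))$ term being exactly the unchanged $\lab_T,\lab'_T$ overhead.

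The decoder outputs $\dist'_G(x,y) = \dist_G(x,z) + \sum_{v\in T(z,y]}\rho(\delta_x(\parent(v),v))$. Since the first term is exact, the total error is $\bigl|\sum_{v\in T(z,y]}(\rho(\delta_x(v)) - \delta_x(v))\bigr| \le \tfrac{D}{2}\cdot|T(z,y]|$, which is useless on its own because the path can be long. The key step — and the main obstacle — is to observe that the error cannot in fact grow unboundedly, because the rounded partial sums still must track a quantity that is pinned down at both ends: $\dist_G(x,z)$ is exact and $\dist_G(x,y)$ is a genuine graph distance, so any path that has accumulated a large rounding drift can be \emph{shortcut}. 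More precisely, I would use the standard trick for additive-approximation-by-rounding: group the path $T(z,y]$ into blocks, and within each block note that $\sum \delta_x(\parent(v),v)$ over the block telescopes to $\delta_x$ of the block's endpoints, which lies in $[-W,W]$ by the triangle inequality; rounding over a block of length $t$ gives error at most $\min\{tD/2,\ ?\}$, and one picks the block length so that $t\cdot \tfrac{D}{2}$ versus the "snap back to the true block-endpoint value" cost balances at $\lceil D/(2W-D)\rceil$. Equivalently: replace the naive "sum of rounded steps" by "rounded sum over maximal blocks", so that the number of independent rounding events on the path from $z$ to $y$ is at most $\lceil |T(z,y]| / \ell\rceil$ for a block length $\ell \approx 2W/D$, each contributing error $\le D/2$, while the within-block telescoping guarantees no drift accumulates inside a block. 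Summing, the total additive error is at most $\lceil \tfrac{D}{2W-D}\rceil$.

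Assembling the pieces: (1) define $I'$ and $\rho$ with $|I'| = 2W+1-D$ and per-step error $\le D/2$; (2) modify the labels of \Cref{theo:esben} by storing $\rho$-rounded deltas, keeping $\lab_T,\lab'_T$ unchanged, giving the claimed label length $\tfrac12 n\log(2W+1-D)+O(\log n\cdot\log(nW))$; (3) describe the block-wise decoder and prove the error bound $\lceil D/(2W-D)\rceil$ via the telescoping-within-blocks argument above; (4) check the two sanity instances $D=W$ (giving a $1$-additive scheme of size $\tfrac12 n\log(W+1)+o(n)$) and $D=2W-1$ (giving a $(2W-1)$-additive scheme of size $\tfrac12 n + o(n)$), which the paper already advertises. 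I expect step (3) to be the crux — everything else is bookkeeping inherited from \Cref{seclong} — and in particular getting the constant in the block length right so that the bound is exactly $\lceil D/(2W-D)\rceil$ rather than a constant factor worse. One subtlety to watch is that the decoder must know the block boundaries, but these can be made to depend only on the $\dfs$-positions (hence recoverable from $\lab_T(x),\lab_T(y)$), so no extra information is needed in the labels.
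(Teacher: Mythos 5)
Your step (1) matches the paper's: choose $I'\subseteq[-W,W]$ of size $2W+1-D$ containing both endpoints of $[-W,W]$, with the $D$ removed integers spread out so that every maximal gap between consecutive elements of $I'$ has length at most $Q+1$, where $Q=\ceil{\frac{D}{2W-D}}$. But your step (3) --- the error control, which you yourself identify as the crux --- does not work as described, and it is not what the paper does. Two concrete problems. First, the blockwise telescoping claim is false: the sum of the true $\delta_x$-values over a block with endpoints $u,v$ telescopes to $\delta_x(u,v)$, which by the triangle inequality lies in $[-\dist_G(u,v),\dist_G(u,v)]$, a range of width up to $2\ell W$ for a block of $\ell$ edges --- not $[-W,W]$. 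Second, and more fundamentally, with any \emph{static} rounding map $\rho$ each block contributes an independent, nonvanishing error, so a path of length $t$ split into $t/\ell$ blocks accumulates error $\Theta\bigl((t/\ell)\cdot(\text{per-block error})\bigr)$, which grows with $t$; your concluding assertion that this sums to $\ceil{D/(2W-D)}$ does not follow. (A minor further slip: with maximal gap $Q+1$, nearest-element rounding costs about $(Q+1)/2$ per step, not $D/2$.)

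The paper's fix is to abandon static rounding and assign the approximate values $\newdelta_x(v)\in I'$ \emph{adaptively, top-down along each root-to-leaf path}. Writing $A(y)=\newdelta_x(r,y)-\delta_x(r,y)$ for the accumulated drift, one proves by induction that $A(y)\in[0,Q]$: if $\delta_x(y)\in I'$, keep it; otherwise let $i_1<\delta_x(y)<i_2$ be the neighbouring elements of $I'$ (so $i_2-i_1\le Q+1$), round down to $i_1$ if that keeps $A(y)\ge 0$, and otherwise round up to $i_2$, which then forces $A(y)\le Q$. This pins the drift of \emph{every} prefix sum to $[0,Q]$ simultaneously, needs no blocks and no change to the decoder, and the label cost is exactly $\ceil{\tfrac12 n\log(2W+1-D)}$ by \Cref{deltaer}, as you computed. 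Replacing your step (3) by this greedy invariant repairs the argument; your steps (1), (2) and (4) are otherwise fine.
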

\begin{proof}
Let us create a subset $I'\subseteq I$ with $|I'|=|I|-D$. In $I'$ we always include the maximum and minimum from $I$, and hence we require $D\leq |I|-2 =2W-1$. In addition we minimize the maximum number $Q$ of consecutive numbers from $I-I'$. Hence, for $i_1 \in I'$ (excluding maximum) there exists a number $i_2\in I'$ such that $i_2\leq i_1 + Q +1$. Since $|I'|=|I|-D=2W+1-D$, we have $2W-D$ pair of neighbors in $I'$, where we by ``neighbors'' mean two numbers in $I'$ with no other number from $I'$ between them. By equally spreading the $D$ missing numbers between the $2W-D$ pairs, we can obtain $Q=\ceil{\frac{D}{2W-D}}$. By substituting values in $I'$ for values in $I$, we can now encode $t$ values from $I$ with $\ceil{t \log (2W+1-D)}$ bits. This will introduce an error, but in the case of $\delta_x$-values, the accumulated error can be kept below $Q$ as described below.

Let $T$ be a tree and consider the $\delta_x$-values $\delta_x(v)=\delta_x(\parent(v),v)$ for nodes $v\in T^*$. Each $\delta_x$-value is a number in $I=[-W,W]$. We will visit the nodes top down starting from (but not including) the root $r$ and assigning to each node $y$ a new approximate value:  $\newdelta_x(y) \in I'$. (For the root $r$ we implicitly associate the value $0$. Implicitly, since it may not be a value in $I'$.)
Recall that $\delta_x(r,y) = \sum_{v \in T[y,r)} \delta_x(v)$, and define $\newdelta_x(r,y) =\sum_{v \in T[y,r)} \newdelta_x(v)$. We will assign the values such that $\delta_x(r,y) \leq \newdelta_x(r,y) \leq \delta_x(r,y)+Q$.

For $y \in T^*$, let $A(y)=\newdelta_x(r,y)-\delta_x(r,y)$. We prove by induction that $A(y) \in [0,Q]$. So assume inductively that $A(\parent(y))\in [0,Q]$. 
If $\delta_x(y) \in I'$, we can define $\newdelta_x(y)=\delta_x(y)$, and we then have $A(y)=A(\parent(y))\in [0,Q]$ as desired. 
If this is not the case, let $i_1,i_2\in I'$ be the largest and smallest numbers from $I'$, respectively, with $i_1< \delta_x(y) < i_2$. By assumption, $i_2-i_1\leq Q+1$.
If $A(\parent(y))+i_1-\delta_x(y)\geq 0$, we can set $\newdelta_x(y)=i_1$ and obtain $A(y)\in [0,Q]$. If not, then we must have $A(\parent(y))+i_2-\delta_x(y)< i_2-i_1\leq Q+1$, so we can set $\newdelta_x(y)=i_2$ and obtain $A(y)\in [0,Q]$.
This concludes the theorem.

Above we have been changing all $\delta_x$-values top-down from the root. In the constant time solution, we could instead change the values top-down for each micro tree $T_i$, keeping exact distances to the root and in the macro tree.
\end{proof}

\subsection{Final approximation} \label{finalapprox}
We can combine the above two approaches by, for a $k\geq 0$, first using \Cref{nodeapprox} to obtain a $(2kW)$-additive distance labeling scheme with labels of size $\frac{n}{2(k+1)}\log f((k+1)W) +O(\log n\cdot \log(nW))$. The approximate scheme will use edge weights in $[1,(k+1)W]$ to which then can apply the technique from \Cref{numberapprox}, finally getting: 

\begin{theorem} \label{approxfinal}
	For any $k\geq 0$ and $D\leq 2(k+1)W-1$ there exists a $(2kW+\ceil{\frac{D}{2(k+1)W-D}})$-additive distance labeling scheme for graphs with $n$ nodes and edge weights in $[1,W]$ using labels of size $\frac{n}{2(k+1)} \log (2(k+1)W+1-D) +O(\log n\cdot \log(nW))$ bits.
\end{theorem}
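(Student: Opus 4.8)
The plan is to compose the two approximation reductions of \Cref{nodeapprox} and \Cref{numberapprox}, applying the second on top of the first. The node‑sparsification of \Cref{kapprox} replaces the shortest‑path tree $T$ of $G$ by a smaller tree $T(k)$ on $n_k:=|T(k)|\le 1+\tfrac{n}{k+1}$ nodes, at the price of ``edge weights'' (i.e.\ $G$‑distances between $T(k)$‑adjacent nodes) as large as $(k+1)W$ and an overall $2kW$ additive error in the lift‑back to $G$; the weight‑sparsification of \Cref{numberapprox} then shrinks the alphabet $[-(k+1)W,(k+1)W]$ in which the corresponding $\delta_x$‑values are stored from $f((k+1)W)=2(k+1)W+1$ symbols down to $2(k+1)W+1-D$, at the price of an extra additive error $Q:=\ceil{\tfrac{D}{2(k+1)W-D}}$, which is well defined precisely when $D\le 2(k+1)W-1$. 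Because the second reduction is applied to the sub‑instance produced by the first — it only rewrites the table of $\delta_x$‑values on $T(k)$ and leaves the lift‑back rule and the exactly‑stored quantities intact — the two error budgets add, yielding a $(2kW+Q)$‑additive scheme whose label length is that of the exact scheme on $T(k)$ with the reduced alphabet plus the $O(\log n\cdot\log(nW))$ of auxiliary sublabels.

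Concretely I would proceed as follows. (1) Fix a shortest‑path tree $T$ of $G$ rooted at $r$ and invoke \Cref{kapprox} to build $T(k)$, together with, for every node $x$, the sublabels $\lab_T(x),\lab'_T(x),\lab_T(x'),\lab'_T(x'),\lab''_T(x)$ and the number $\dist_G(x,r)$, exactly as in the proof of \Cref{theo:esbenapprox}; these occupy $O(\log n\cdot\log(nW))$ bits and already realize the lift‑back rule with its $2kW$‑additive guarantee. (2) The only weight‑dependent part of $x$'s label is the cyclic table of the $\lfloor n_k/2\rfloor$ values $\delta_x(\parent_{T(k)}(v),v)\in[-(k+1)W,(k+1)W]$. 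Apply \Cref{numberapprox} to exactly this table, with $W$ replaced by $(k+1)W$: choose a subset $I'\subseteq[-(k+1)W,(k+1)W]$ with $|I'|=2(k+1)W+1-D$ that keeps the two extreme values and equi‑spreads the $D$ deletions, so that consecutive retained symbols differ by at most $Q+1$; then reassign, top‑down along $T(k)$, a symbol $\newdelta_x(\cdot)\in I'$ to each edge so that for every node $y$ the accumulated error $A(y)=\newdelta_x(r,y)-\delta_x(r,y)$ stays in $[0,Q]$ (for the constant‑time variant one instead resets this reassignment at every micro root while keeping macro‑tree distances exact, as remarked at the end of \Cref{numberapprox}). (3) Encode the reassigned table with \Cref{deltaer} in $\ceil{\tfrac12 n_k\log(2(k+1)W+1-D)}\le\frac{n}{2(k+1)}\log(2(k+1)W+1-D)+O(\log(nW))$ bits; adding the sublabels from step~(1) gives the total $\frac{n}{2(k+1)}\log(2(k+1)W+1-D)+O(\log n\cdot\log(nW))$, which is $o(n)$ beyond the main term under the standing assumption $\log W=o(\log n)$ together with $\log k=o(\log n)$ as in the results overview.

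For correctness the decoder runs the decoder of \Cref{theo:esbenapprox} verbatim but reads $\newdelta_x$ in place of $\delta_x$. In its easy case it returns $kW$ plus an exactly recovered $T$‑distance, so its deviation from $\dist_G(x,y)$ is the usual $2kW$; in its hard case it returns $kW+\dist_G(x,z')+\sum_{v\in T(k)(z',y']}\newdelta_x(\parent_{T(k)}(v),v)$, where $z'$ is the $T(k)$‑nearest‑common‑ancestor of the representatives $x',y'$ of $x,y$ and $\dist_G(x,z')$ is stored exactly; the sum equals $\delta_x(z',y')$ plus a reassignment error $A(y')-A(z')$ of absolute value at most $Q$, while $kW+\dist_G(x,z')+\delta_x(z',y')=kW+\dist_G(x,y')$ already lies within $2kW$ of $\dist_G(x,y)$, so the returned value differs from $\dist_G(x,y)$ by at most $2kW+Q$. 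I expect the main obstacle to be precisely this error bookkeeping: one must check that the $2kW$ term is charged only once — at the single jump from a node to its $T(k)$‑representative — and is not amplified by the rounding, and that the reassignment error along the query‑dependent portion $T(k)(z',y']$ of the path never exceeds $Q$ regardless of how long that path is, which is exactly the invariant $A(\cdot)\in[0,Q]$ maintained by the top‑down assignment. The cleanest way to see that the budgets merely add is to observe that the weight‑sparsification degrades only the exact‑on‑$T(k)$ component of the scheme of \Cref{theo:esbenapprox} into a $Q$‑additive‑on‑$T(k)$ component, while leaving the $2kW$‑additive lift‑back rule and the exactly‑stored numbers $\dist_G(x,r)$ and $\dist_G(x,z')$ untouched; composition then costs at most $2kW+Q$ by the same telescoping/triangle‑inequality reasoning already used in each of the two constituent proofs.
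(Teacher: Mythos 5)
Your proposal is correct and follows essentially the same route as the paper, whose own proof of this theorem is just a two-line remark that one composes the node-sparsification of \Cref{nodeapprox} (giving the $2kW$ term and the enlarged alphabet $[-(k+1)W,(k+1)W]$) with the weight-sparsification of \Cref{numberapprox} applied to that alphabet (giving the extra $Q=\ceil{D/(2(k+1)W-D)}$ term and the reduced alphabet size $2(k+1)W+1-D$). Your write-up merely makes explicit the error bookkeeping that the paper leaves implicit, so there is nothing further to reconcile.
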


\end{document}